\newcommand{\ie}{{\it i.e.}\ }
\titleformat{\chapter}[display]
  {\bfseries\Large}
  {\filright\MakeUppercase{\chaptertitlename} \Large\thechapter}
  {3.5ex}
  {\titlerule\vspace{1ex}\filleft}
  [\vspace{1ex}\titlerule]
\numberwithin{equation}{section}
\theoremstyle{plain}
\newtheorem{prp}{Proposition}[section]
\newtheorem{thrm}[prp]{Theorem}
\newtheorem{crll}[prp]{Corollary}
\newtheorem{lmm}[prp]{Lemma}
\theoremstyle{definition}
\newtheorem{dfn}[prp]{Definition}
\theoremstyle{remark}
\newcommand{\R}{\mathbb{R}}
\newcommand{\C}{\mathbb{C}}
\newcommand{\Ham}{\mathcal{H}}
\newcommand{\Lag}{\mathscr{L}}
\newcommand{\cpb}[2]{\{\! | #1, #2| \! \}}
\newcommand{\pb}[2]{\{ #1, #2 \}}
\newcommand{\ip}[2]{ #1\lrcorner #2 }
\newcommand{\omegone}{\Omega^{(1)}}
\renewcommand{\^}{\wedge}
\renewcommand{\epsilon}{\varepsilon}
\renewcommand{\tilde}{\widetilde}
\newcommand{\parder}[2]{\frac{\partial #1}{\partial #2}}
\newcommand{\varder}[2]{\frac{\delta #1}{\delta #2}}
\newcommand{\tilpartial}{\tilde \partial}
\newcounter{savefootnote}
\newcounter{symfootnote}
\newcommand{\symfootnote}[1]{%
   \setcounter{savefootnote}{\value{footnote}}%
   \setcounter{footnote}{\value{symfootnote}}%
   \ifnum\value{footnote}>8\setcounter{footnote}{0}\fi%
   \let\oldthefootnote=\thefootnote%
   \renewcommand{\thefootnote}{\fnsymbol{footnote}}%
   \footnote{#1}%
   \let\thefootnote=\oldthefootnote%
   \setcounter{symfootnote}{\value{footnote}}%
   \setcounter{footnote}{\value{savefootnote}}%
}
\begin{document}
\title{Hamiltonian multiform description of an integrable hierarchy}
\author{Vincent Caudrelier, Matteo Stoppato}

\begin{center}
	\textbf{\Large Hamiltonian multiform description of an integrable hierarchy}\\[3ex]
	\large{Vincent Caudrelier, Matteo Stoppato\symfootnote{Corresponding author's email address: \texttt{mmms@leeds.ac.uk}.}}\\[3ex]
	
	School of Mathematics, University of Leeds, LS2 9JT, UK  \\[3ex]
\end{center}

\vspace{0.5cm}

\begin{abstract}
Motivated by the notion of Lagrangian multiforms, which provide a Lagrangian formulation of integrability, and by results of the authors on the role of covariant Hamiltonian formalism for integrable field theories, we propose the notion of Hamiltonian multiforms for integrable $1+1$-dimensional field theories. They provide the Hamiltonian counterpart of Lagrangian multiforms and encapsulate in a single object an arbitrary number of flows within an integrable hierarchy. For a given hierarchy, taking a Lagrangian multiform as starting point, we provide a systematic construction of a Hamiltonian multiform based on a generalisation of techniques of covariant Hamiltonian field theory. This also produces two other important objects: a symplectic multiform and the related multi-time Poisson bracket. They reduce to a multisymplectic form and the related covariant Poisson bracket if we restrict our attention to a single flow in the hierarchy. Our framework offers an alternative approach to define and derive conservation laws for a hierarchy. We illustrate our results on three examples: the potential Korteweg-de Vries hierarchy, the sine-Gordon hierarchy (in light cone coordinates) and the Ablowitz-Kaup-Newell-Segur hierarchy.
\end{abstract}

\section{Introduction}

The objects and results presented in this paper, to be detailed below, come from the confluence of several new ideas that have emerged in the theory of integrable systems in recent years.
The first idea, introduced in 2009 by Lobb and Nijhoff \cite{Lobb_Nijhoff_2009} is the notion of \emph{Lagrangian multiforms}. The motivation was to address the completely open problem of characterising integrability of (partial) differential (or difference) equations purely from a variational/Lagrangian point of view. Despite the well known and fundamental interplay between Lagrangian and Hamiltonian formalism in classical and quantum physics, when it comes to integrable systems, one can only observe that the Hamiltonian approach has been the overwhelming favourite, mainly (but not fully) because of the extraordinary success of the canonical quantization procedure. This was carried out via the classical $r$-matrix approach \cite{Sklyanin_1979,SemenovTianShansky_1983} which leads to the quantum $R$-matrix approach \cite{Sklyanin_1979,Faddeev_Sklyanin_Takhtajan_1979}, both of which have given unifying frameworks for dealing with integrable systems and led to their own fully fledged research areas in (Poisson) geometry and quantum groups. 
Initially developed in the realm of fully discrete integrable systems, Lagrangian multiforms provide a framework whereby the notion of multidimensional consistency \cite{Nijhoff_2002,Bobenko_Suris_2002}, which captures the 
analog of the commutativity of Hamiltonian flows known in continuous integrable systems, is encapsulated in a generalised variational principle. The latter contains the standard Euler-Lagrange equations for the various equations forming an integrable hierarchy as well as additional equations, originally called {\it corner equations} which can be interpreted as determining the allowed integrable Lagrangians themselves. The set of all these equations is now called multiform Euler-Lagrange equations. The original work of Lobb and Nijhoff \cite{Lobb_Nijhoff_2009} stimulated a wealth of subsequent developments, first in the discrete realm, see e.g. \cite{Lobb_Nijhoff_Quispel_2009,Lobb_Nijhoff_2010,Bobenko_Suris_2010,Yoo-Kong_Lobb_Nijhoff_2011,Boll_Petrera_Suris_2014,Boll_Petrera_Suris_2015}, then progressively into the continuous realm for finite dimensional systems, see e.g. \cite{Suris_2013,Petrera_Suris_2017} and $1+1$-dimensional field theories, see e.g. \cite{Xenitidis_Nijhoff_Lobb_2011}, up to more recent developments in continuous field theory, see e.g. \cite{Suris_2016,Suris_Vermeeren_2016,Vermeeren_2019,Sleigh_Nijhoff_Caudrelier_2019,Petrera_Vermeeren_2019}, including the first example in $2+1$-dimensions \cite{Sleigh_Nijhoff_Caudrelier_2019_2}. 

Given that our focus is on $1+1$-dimensional field theories in this paper, let us present briefly the main ingredients of the theory of Lagrangian multiforms in this context. The starting point is to consider a two-form 
\begin{equation}
\label{lagrangian:multiform}
\Lag[u] = \sum_{i<j=1}^n L_{ij}[u] dx^{ij},~~n>2\,,
\end{equation} 
where for each $i,j$, $L_{ij}[u]$ is a function of a field\footnote{We only consider a single scalar field $u$ at this stage for simplicity of exposition but multicomponent fields are easily included. We will do so without further comment when we consider the example of the AKNS system in Section \ref{Section:AKNS}.} $u$ depending on the $n$ independent variables $x_1,\dots,x_n$ (the ``times'' of the hierarchy) and of the derivatives of $u$ with respect to these variables up to some finite order. We used the notation $dx^{ij}=dx^{i}\wedge dx^{j}$ and the convention $L_{ij}[u] =- L_{ji}[u]$. For convenience in this paper, we assume that the $L_{ij}[u]$ do not depend explicitly on the independent variables.
Associated to this two-form is an action 
\begin{equation}
S[u,\sigma] = \int_\sigma \Lag[u]\,,
\end{equation}
or rather a collection of actions, labelled by a $2$-dimensional (smooth) surface $\sigma$ in $\R^n$. At this stage, it is worthwhile noting that the 
standard variational approach to a field theory with two independent variables $x_1,x_2$ would consider a {\it volume form} $\Lag[u] =  L[u] dx^1\wedge dx^2$, with Lagrangian density $L[u]$, and simply an action $S[u] = \iint L[u]dx^1\wedge dx^2$. The novelty is in considering a $2$-form in a larger space\footnote{Note that we restrict this space to be of finite dimension $n$ here whereas strictly speaking, for an integrable field hierarchy one should let $n\to\infty$. The number $n$ corresponds to the number of commuting flows with respect to $x_1,\dots,x_n$ that we incorporate in the Lagrangian multiform. Our pragmatic approach is to consider $n$ fixed but arbitrary.} as well as an action labelled by a surface into this larger space.

The (generalised) equations of motion, called {\it multiform Euler-Lagrange equations}, are then obtained by postulating a (generalised) variational principle: one looks for critical points $u$ of $S[u,\sigma]$ simultaneously for all smooth surfaces $\sigma$ in $\R^n$ and, on critical points, the action is stationary with respect to arbitrary local variations of $\sigma$. The first requirement produces what is called multiform Euler-Lagrange equations which were given in \cite{Suris_Vermeeren_2016}. It can be shown \cite{Vermeeren_2018,Sleigh_Nijhoff_Caudrelier_2019_2} that they can be written compactly as $\delta d \Lag =0$ where we have used the two operator $\delta$ and $d$ arising in the variational bicomplex formalism (see below for a recap). The second requirement gives us the \emph{closure relation} on the equations of motion, i.e. the fact that on the equations of motion $d \Lag =0$. 

It turns out that this principle is much stronger than the standard variational principle because it not only produces certain Euler-Lagrange equations that determine the solutions for $u$, but also puts constraints on the Lagrangian coefficients $L_{ij}$ themselves. This was the original motivation for this principle which aims at capturing integrability in a variational fashion.  Needless to say that in general, it could well be that there are no solutions to the set of equations for the fields and for the Lagrangian multiform itself, or at least no nontrivial ones. We do not wish to include those cases so in the rest of this article, we assume that we have a nonzero form $\Lag$ whose multiform Euler-Lagrange equations are compatible and allow for (nontrivial) solutions. Such an $\Lag$ will be called a Lagrangian multiform if in addition it also satisfies the closure relation on the equations of motion. 
With this understanding, we take the following working definition:
\begin{dfn}
\label{def_Lag_multiform}
	The horizontal $2$-form \eqref{lagrangian:multiform} is a \emph{Lagrangian multiform} if $\delta d \Lag =0$ implies $d \Lag =0$.
\end{dfn}
Note that the requirement of imposing the closure relation on the equations of motion to define a Lagrangian multiform was an important feature of the original work of Lobb and Nijhoff. It has been dropped in some subsequent works by other authors and the related terminology is ``pluri-Lagrangian'' in that context. However, in the recent work \cite{Vermeeren_2020}, it is shown that this property is the Lagrangian counterpart of having Hamiltonian functions in involution. Our results here shed some more light on this connection between Lagrangian multiforms and Hamiltonians in involution, in the form of Theorem \ref{closure_Ham}. This clarifies the role of the closure relation to capture integrability in a Lagrangian framework.

The second idea, introduced by the authors in \cite{Caudrelier_Stoppato_2020} in the context of $1+1$-dimensional integrable field theories, is to use ideas from covariant Hamiltonian field theory, whose origins\footnote{The literature on this topic is vast and forms an entire community in its own right. It cannot be included here but we kindly refer the interested reader to the introduction of \cite{Caudrelier_Stoppato_2020} where an effort was made to point to key references, at least from the point of view of our work.} can be traced to the early work of de Donder and Weyl \cite{DeDonder_1930,Weyl_1935}, in conjunction with the $r$-matrix formalism. The latter had been confined to the standard Hamiltonian formalism since its introduction, thus breaking the natural symmetry between the independent space-time variables. The work \cite{Caudrelier_Stoppato_2020} had been motivated by earlier results \cite{Caudrelier_Kundu_2015,Caudrelier_2015,Avan_Caudrelier_Doikou_Kundu_2016} which showed a surprising spacetime duality in the classical $r$-matrix structure of a field theory. The origin of this duality was later explained in \cite{Avan_Caudrelier_2017} in the case of the Ablowitz-kaup-Newell-Segur (AKNS) hierarchy \cite{Ablowitz_Kaup_Newell_Segur_1974}. In \cite{Caudrelier_Stoppato_2020}, building on results and formalisms due for instance to Kanatchikov \cite{Kanatchikov_1998} and Dickey \cite{Dickey_2003}, we were able to construct a {\it covariant Poisson bracket} which possesses the classical $r$-matrix structure when evaluated on the natural Lax form of the theory. It is important to stress that the results were obtained from a single Lagrangian corresponding to the given $1+1$-dimensional theory at hand, \ie from a standard Lagrangian volume form. This allowed us to construct a multisymplectic form which in turn gave us access to the desired covariant Poisson bracket and $r$-matrix structure. We were also able to obtain the zero curvature representation, typical of integrable field theories, as a covariant Hamilton equation for the Lax form.

In view of this account, two natural questions arise:

\begin{enumerate}
	\item What happens to the construction of the multisymplectic form and the covariant Poisson bracket if we use a Lagrangian multiform instead of a standard Lagrangian (volume form) as a starting point?
	
	\item Provided the previous construction can be implemented, can we generalise the results of \cite{Caudrelier_Stoppato_2020} about the $r$-matrix structure of the covariant Poisson bracket to the structure that generalises this covariant Poisson bracket? In other words, can we extend the derivation of the $r$-matrix to a {\it multi-time Poisson bracket} that would appear when dealing with a whole integrable hierarchy?
\end{enumerate}

In the present paper, we investigate in detail the first question and leave the second question for our work \cite{Caudrelier_Stoppato_2020_3}. Specifically, in the context of $1+1$-dimensional integrable field theories, our main results are as follows:
\begin{itemize}
	\item We introduce a \textbf{Hamiltonian multiform} $\displaystyle \Ham = \sum_{i<j=1}^n H_{ij} dx^{ij}$, which is naturally associated to any given Lagrangian multiform. This requires to adapt techniques and notions from covariant Hamiltonian field theory and multisymplectic geometry that were conveniently cast into a purely algebraic framework in \cite{Dickey_2003}. We then prove the central result that $d\Ham=-2d\Lag$ on the multiform Euler-Lagrange equations. 
	 
	\item Alongside the Hamiltonian multiform associated to a Lagrangian multiform, our approach produces a generalisation of the multisympectic form that is canonically associated to a standard Lagrangian. We call it \textbf{symplectic multiform} for reasons that will be elaborated upon in the text. It naturally incorporates into a single object each symplectic form related to each Lagrangian $L_{1j}$ in the Lagrangian multiform. The symplectic multiform encapsulates the motion under the flows of with respect to the $n$ independent variables $x_1,\dots,x_n$ on our space of variables which forms the analog of the covariant phase space usually associated to first order field theories.
	
	\item Equipped with the symplectic multiform, we are able to define a \textbf{multi-time Poisson bracket}. It naturally incorporates certain {\it single time} Poisson brackets, which can be assembled naturally into pairs of dual Poisson brackets that were originally observed to possess the same $r$-matrix structure in \cite{Caudrelier_Kundu_2015,Caudrelier_2015,Avan_Caudrelier_Doikou_Kundu_2016}. Our multi-time Poisson bracket reproduces the covariant Poisson bracket in the case of $n=2$ independent (spacetime) variables, as considered for instance in \cite{Caudrelier_Stoppato_2020}. 
	
	\item We use these results to derive conservation laws traditionally signalling integrability in a field theory and whose construction has been the object of numerous studies based on fundamental ideas such as bi-Hamiltonian structures, recursion operators and Lax pairs, see e.g. \cite{Olver_2000}. Our method relies only on the elements introduced in our approach. This is implemented on the examples of the potential Korteweg-de Vries (pKdV) and AKNS Hamiltonian multiforms.
\end{itemize}

The paper is organised as follows. In Section \ref{Section:HamiltonianMultiform} we review the essential ingredients of the variational bicomplex as presented in \cite{Dickey_2003} in an algebraic language, instead of the original geometric approach, see e.g. \cite{Anderson_1989,Anderson_1992}. We use this framework to define and develop the theory of Hamiltonian multiforms, starting from a Lagrangian multiform. We then show how conservation laws fit into this context. In Sections \ref{Section:pKdV}, \ref{Section:SG} and \ref{Section:AKNS}, we illustrate the various constructions on the examples respectively of the pKdV hierarchy, of the sine-Gordon (sG) hierarchy in light-cone coordinates, and the AKNS hierarchy.

\section{Hamiltonian multiform, symplectic multiform and multi-time Poisson brackets}\label{Section:HamiltonianMultiform}
In this section, we first review essential notions and notations for our purposes in Sections \ref{varbi} and \ref{multi}. By presenting what is well known for a field theory associated to a Lagrangian volume form, {\it i.e.} covariant Hamiltonian, multisymplectic form and covariant Poisson bracket, it will be easier to appreciate the novelty brought in by the transition to the Lagrangian multiform framework, despite several of the defining relations looking superficially the same.  In particular, in analogy with the three fundamental objects just mentioned, we will introduce in Sections \ref{multi_Ham} and \ref{multiPB} the notion of Hamiltonian multiform, symplectic multiform and multi-time Poisson brackets together with their basic properties. 

\subsection{Elements of variational calculus with the variational bicomplex}\label{varbi}

The intuition behind the variational bicomplex formalism for a field theory can be summarised as follows. Let $M$ be the (spacetime) manifold with local coordinates $x^i$, $i=1, \ldots, n$. The manifold $M$ is viewed as the base manifold in a fibered manifold $\pi:E\to M$ whose sections represent the fields of the theory. The variational bicomplex is a double complex of differential forms defined on the infinite jet bundle of $\pi:E\to M$. One introduces vertical and horizontal differentials $\delta$ and $d$ which satisfy
\begin{equation}
d^2=0=\delta^2\,,~~d\delta=-\delta d\,,
\end{equation}
so that the operator $d+\delta$ satisfies $(d+\delta)^2=0$.

We now follow \cite{Dickey_2003} for a more detailed exposition of what we need in this paper. For convenience, we will only consider theories whose Lagrangian do not depend explicitly on the independent variables $x_i$. Let $\mathcal{K}=\R$ or $\C$. Consider the differential algebra with the commuting derivations $\partial_i$, $i=1, \dots, n$ generated by the commuting variables\footnote{Recall that for the most part in this paper, we will only consider a scalar field, $N=1$, except for the AKNS example.} $u_k^{(i)}$, $k=1, \dots, N$, $(i)=(i_1, \dots, i_n) $ being a multi-index, quotiented by the relations 
\begin{equation}
\partial_j u_k^{(i)}=u_k^{(i)+e_j}\,,
\end{equation}
where $e_j=(0,\dots,0,1,0,\dots,0)$ only has $1$ in position $j$. 
We simply denote $u_k^{(0,\dots,0)}$ by $u_k$, the fields of the theory which would be the local fibre coordinates mentioned above.
We denote this differential algebra by $\mathcal{A}$. We will need the notation 
\begin{equation}
\partial^{(i)}=\partial_1^{i_1}\partial_2^{i_2}\dots \partial_n^{i_n}\,.
\end{equation}
We consider the spaces $\mathcal{A}^{(p,q)}$, $p,q\ge 0$ of {\it finite} sums of the following form
\begin{equation}
\label{form_pq}
\omega =\sum_{(i),(k),(j)} f^{(i)}_{(k),(j)}\delta u_{k_1}^{(i_1)} \^ \dots \^ \delta u_{k_p}^{(i_p)} \^ dx^{j_1} \^ \dots \^ dx^{j_q}, \qquad f^{(i)}_{(k),(j)}\in \mathcal{A}
\end{equation}
which are called $(p,q)$-forms. In other words, $\mathcal{A}^{(p,q)}$ is the space linearly generated by the basis elements $\delta u_{k_1}^{(i_1)} \^ \dots \^ \delta u_{k_p}^{(i_p)} \^ dx^{j_1} \^ \dots \^ dx^{j_q}$ over $\mathcal{A}$, where $\wedge$ denotes the usual exterior product. We define the operations $d:\mathcal{A}^{(p,q)}\to \mathcal{A}^{(p,q+1)}$ and $\delta:\mathcal{A}^{(p,q)}\to \mathcal{A}^{(p+1,q)}$ as follows. They are graded derivations
\begin{subequations}
\begin{gather}
d(\omega_1^{(p_1,q_1)} \wedge \omega_2^{(p_2,q_2)}) = d\omega_1^{(p_1,q_1)} \wedge \omega_2^{(p_2,q_2)}+(-1)^{p_1+q_1}\omega_1^{(p_1,q_1)} \wedge d\omega_2^{(p_2,q_2)},\\
\delta(\omega_1^{(p_1,q_1)} \wedge \omega_2^{(p_2,q_2)}) = \delta\omega_1^{(p_1,q_1)} \wedge \omega_2^{(p_2,q_2)}+(-1)^{p_1+q_1}\omega_1^{(p_1,q_1)} \wedge \delta\omega_2^{(p_2,q_2)}\,,
\end{gather}
\end{subequations}
and on the generators, they satisfy
\begin{subequations}
\begin{gather}
df = \sum \partial_i \,f dx^i=\sum(\parder{f}{x^i} + \parder{f}{u_k^{(j)}} \, u_k^{(j)+e_i})dx^i\,, \quad f \in \mathcal{A}\,,\\
\delta f = \sum \parder{f}{u_k^{(i)}} \, \delta u_k^{(i)}\,, \quad f \in \mathcal{A}\,,\\ 
\delta(dx^i) = \delta(\delta u_k^{(j)}) = d(dx^i)=0,\\
d(\delta u_k^{(i)}) = - \delta d u_k^{(i)} = - \sum \delta u_k^{(i)+e_j} \^ dx^j.\label{ddeltagenerator}
\end{gather}
\end{subequations}
This determines the action of $d$ and $\delta$ on any form as in \eqref{form_pq}. As a consequence, one can show that $d^2=\delta^2=0$ and $d\delta =-\delta d$. For our purpose, it is sufficient to take the following (simplified) definition for the variational bicomplex: it is the space ${\cal A}^*=\bigoplus_{p,q}{\cal A}^{(p,q)}$ equipped with the two derivation $d$ and $\delta$. Due to the geometrical interpretation of these derivations, $d$ is called horizontal derivation while $\delta$ is called vertical derivation.
Note that the direct sum over $q$ is finite and runs from $0$ (scalars) to $n$ (volume horizontal forms) whereas the sum over $p$ runs from $0$ to infinity. Of course, each form in ${\cal A}^*$ only contains a finite sum of elements of the form \eqref{form_pq} for certain values of $p$ and $q$. The bicomplex ${\cal A}^*$ generates an associated complex ${\cal A}^{(r)}=\bigoplus_{p+q=r}{\cal A}^{(p,q)}$ and derivation $d+\delta$. It is proved that both the horizontal sequence and the vertical sequence are exact, see e.g. \cite{Dickey_2003}.\\
Dual to the notion of forms is the notion of vector fields. We consider the dual space of vector fields ${\cal T}{\cal A}$ to the space of one-forms ${\cal A}^{(1)}$ with elements of the form
\begin{equation}
\label{vector_field}
\xi=\sum_{k,(i)} \xi_{k,(i)}\, \partial_{u_{k}^{(i)}}+\sum_i \xi^*_i\, \partial_i\,.
\end{equation}
The interior product with a form is obtained in the usual graded way together with the rule
\begin{equation}
\ip{\partial_i}{dx^j}=\delta_{ij}\,,~~\ip{\partial_{u_{k}^{(i)}}}{\delta u_l^{(j)}}=\delta_{kl}\delta_{(i)(j)}\,.
\end{equation}
where $\delta_{(i)(j)} = \prod_k \delta_{i_k j_k}$. For instance, with $i\neq j$ and $(i)\neq (j)$ or $k\neq l$,
\begin{subequations}
\begin{gather}
\ip{\partial_i}{(\delta{u_k^{(l)}}\wedge dx^i \wedge dx^j)}=-\delta{u_k^{(l)}} \wedge dx^j\,,\\
\ip{\partial_{u_k^{(i)}}}{(\delta{u_l^{(j)}}\wedge \delta{u_k^{(i)}} \wedge dx^m)}=-\delta{u_l^{(j)}} \wedge dx^m\,.
\end{gather}
\end{subequations}
In particular, we will need the following vertical vector fields
\begin{equation}
\tilpartial_i = \sum_{k, (j)} u_k^{(j) + e_i} \parder{}{u_k^{(j)}}.
\end{equation}
Let us also introduce the notation $\partial'_i$ by $\partial_i=\partial'_i+\tilpartial_i$.
If $f\in \mathcal{A}$ does not depend explicitly on the space-time variables then $\partial_i f = \tilpartial_i f$. In addition to the vector fields \eqref{vector_field}, in general calculations in the variational bicomplex also require the use of multivector fields of the form $\xi_1\wedge \dots \wedge \xi_r$ where each $\xi_i$ is of the form \eqref{vector_field}. In this paper, we will mostly need those multivector fields that are linear combination of $\partial_{u_{k}^{(i)}}\wedge \partial_j$ with coefficients in ${\cal A}$ and we may simply call them vector fields as the context should not lead to any confusion. The following example shows the rule for the interior product of such a multivector field, with $(i)\neq (j)$ or $k\neq l$,
\begin{equation*}
\ip{(\partial_{u_{k}^{(i)}}\wedge \partial_\ell)}{(\delta{u_l^{(j)}}\wedge \delta{u_k^{(i)}} \wedge dx^m)}=\ip{\partial_{u_{k}^{(i)}}}\ip{( \partial_\ell}{(\delta{u_l^{(j)}}\wedge \delta{u_k^{(i)}} \wedge dx^m))}=-\delta_{\ell m}\,\delta{u_l^{(j)}}\,.
\end{equation*}
Finally, we will need the following useful identity, cf  \cite[Corollary 19.2.11]{Dickey_2003}.
\begin{equation}
\tilpartial_i = \delta \ip{\tilpartial_i} + \ip{\tilpartial_i}{\delta}\,.
\end{equation}

\subsection{The multisymplectic approach to a PDE}\label{multi}

Equipped with the above basic elements of the variational bicomplex, we now recall how to describe a partial differential equation admitting a Lagrangian formulation into a covariant Hamiltonian formulation. This serves as a basis to introduce known results and objects, in particular the multisymplectic form. What is reviewed here will be helpful to identify the novel ingredients in the rest of this paper. 

Recall that we focus on two-dimensional field theories so our starting point is a \emph{Lagrangian volume $2$-form}
\begin{equation}
\Lambda = L \,dx^1\wedge dx^2.
\end{equation}
$L$ is the Lagrangian density and depends on the fields $u^k$ $k=1,\dots,N$ and their derivatives with respect to $x^1$ and $x^2$. We use the variational bicomplex described in the previous section with $n=2$. It is known that there exist unique elements $A_k\in\mathcal{A}$ such that 
\begin{equation}
\label{def_omegaone}
\delta \Lambda = \sum_k A_k\,\delta u^k \wedge dx^1 \wedge dx^2 - d \omegone
\end{equation}
where  $\omegone\in{\cal A}^{(1,1)}/d{\cal A}^{(1,0)}$ is only determined up to a total horizontal derivative. The coefficients $A_k$ are denoted 
 $\varder{L}{u^k}$ and are the variational derivatives with respect to $u_k$. One then obtains the Euler-Lagrange equations by setting $\varder{L}{u^k}=0$ for every $k$. $\omegone$ is obtained using the property $\delta d + d \delta=0$ in $\delta \Lambda$ as much as possible. Of course, the content of this result is simply the local analog of the standard integration by parts procedure used when varying the action $\int \Lag$. In the latter, the boundary term $\int d \omegone$ is usually discarded. To understand the role played by $\omegone$, we remark the following facts. For a classical finite-dimensional Lagrangian system, this is (the pull-back to the tangent bundle of) the canonical one form $\parder{L}{\dot q} \delta q$, and one can obtain the symplectic form by taking its $\delta$-differential. Similarly, in the case of field theories where $\Lag$ is taken to be a volume form, the form is $\omegone = \omega^{(1)}_1 \wedge dx^1 + \omega^{(1)}_2 \wedge dx^2$ where $\omega^{(1)}_1$ and $\omega^{(1)}_2$ each have a similar structure to the canonical one form of the finite dimensional case. It contains the {\it usual} symplectic structure $-\omega^{(1)}_1$ (if we consider $x_2$ as our {\it time}) but also the dual structure $\omega^{(1)}_2$ (which would correspond to performing the Legendre transform when choosing $x_1$ as the time variable). To summarize, for a field theory, $\omegone$ realises the Legendre transform simultanously with respect to all independent variables.
 
The next step is to define the \emph{covariant Hamiltonian} as
\begin{equation}
\label{def_cov_Ham}
H = - \Lambda + \sum_{j=1}^2 dx^j \wedge \ip{\tilpartial_j }{\omegone}.
\end{equation}
and the \emph{multisymplectic form} $\Omega\in \mathcal{A}^{(2,1)}$ as 
\begin{equation}
\label{def_multisymplectic_form}
    \Omega = \delta \omegone\,. 
\end{equation}
One obtains the covariant Hamilton equations as 
\begin{equation}
\delta H = \sum_{j=1}^2 dx^j \wedge \ip{\tilpartial_j}{\Omega},
\end{equation}
which are equivalent to the to the Euler-Lagrange equation, as they should.
In general, let us note that if a PDE involves $n$ independent variables and admits a Lagrangian description, $\Lambda$ and $H$ are volume $n$-forms, $\omegone \in \mathcal{A}^{(1,n-1)}$ and $\Omega \in \mathcal{A}^{(2,n-1)}.$\\
 
Equipped with a multisymplectic form we can consider the definition of a covariant Poisson bracket, following for instance Kanatchikov \cite{Kanatchikov_1998}. We stress that the definition of a covariant Poisson bracket from a multisymplectic form, in a way that mimics the situation in classical mechanics, has been part of a rich activity since the early proposals. In particular, the Jacobi identity is a delicate issue, as well as the need to restrict to certain forms, called Hamiltonian, as we explain below. We refer the reader to \cite{Forger_Salles_2015} for a detailed account. For our purpose, we will simply use Kanatchikov's ideas and adapt them to our purposes. The results of \cite{Caudrelier_Stoppato_2020} show that, at least in our context, this leads to a satisfactory covariant Poisson bracket satisfying the Jacobi identity, thanks to the fact that the latter translates into the classical Yang-Baxter equation for the classical $r$-matrix.

We need to restrict our attention to the a special class of forms called Hamiltonian. We take the following definition which is sufficient for our purposes: a horizontal form $F$ is said to be Hamiltonian if there exists a (multi)vector field $\xi_F$ such that $\ip{\xi_F}{\Omega} = \delta F$. Contrary to the usual symplectic case, the property of Hamiltonianicity is quite restrictive in the multisymplectic case.

For two Hamiltonian forms $P$ and $Q$, of (horizontal) degree respectively $r$ and $s$, we can define their \emph{covariant Poisson bracket} as
\begin{equation}
\label{def_cov_PB}
\cpb{P}{Q}_c = (-1)^{r} \ip{\xi_P}\ip{\xi_Q}\Omega.
\end{equation}
We now state the following fact, which was only obtained explicitely on examples in \cite{Caudrelier_Stoppato_2020}, but for which no general proof was given. 
\begin{prp}
If the covariant Hamiltonian density $h = \ast^{-1} H$ is a Hamiltonian form, then we have for any Hamiltonian $1$-form $F$
\begin{equation}
d F = \cpb{h}{F}_c dx^1\wedge dx^2.
\end{equation}
\end{prp}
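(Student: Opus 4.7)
The plan is to identify an explicit Hamiltonian multivector field $\xi_h$ for $h$ dictated by the covariant Hamilton equation, and then use the key identity $\tilpartial_i=\delta\ip{\tilpartial_i}+\ip{\tilpartial_i}{\delta}$ to match $dF$ with $\ip{\xi_h}{\delta F}\,dx^1\wedge dx^2$.

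First I would extract $\xi_h$ from $\delta H=\sum_j dx^j\wedge\ip{\tilpartial_j}{\Omega}$. Decomposing $\omegone=\omega^{(1)}_1\wedge dx^1+\omega^{(1)}_2\wedge dx^2$ and setting $\Omega_i=\delta\omega^{(1)}_i\in\mathcal A^{(2,0)}$, a short bookkeeping of signs reduces the covariant Hamilton equation to the vertical $1$-form identity
\[
\delta h=\ip{\tilpartial_2}{\Omega_1}-\ip{\tilpartial_1}{\Omega_2}.
\]
The right-hand side is precisely $\ip{\xi_h}{\Omega}$ for the $(1,1)$-multivector
\[
\xi_h=\tilpartial_2\wedge\partial_1-\tilpartial_1\wedge\partial_2,
\]
which is therefore a canonical choice of Hamiltonian multivector field witnessing the assumption that $h$ is Hamiltonian.

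Second, I would rewrite $dF$ using the identity from the end of Section \ref{varbi}. For each component $F_j$ of $F=F_j\,dx^j$, since $\ip{\tilpartial_i}{F_j}=0$ one gets $\tilpartial_i F_j=\ip{\tilpartial_i}{\delta F_j}$, and because the coefficients of $\Lag$ (hence of $F$) do not depend explicitly on the $x^i$,
\[
dF=\sum_i dx^i\wedge\ip{\tilpartial_i}{\delta F}.
\]
Writing $\delta F=\beta_1\wedge dx^1+\beta_2\wedge dx^2$ with $\beta_j\in\mathcal A^{(1,0)}$, this evaluates to $(\ip{\tilpartial_1}{\beta_2}-\ip{\tilpartial_2}{\beta_1})\,dx^1\wedge dx^2$. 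On the other side, using $\ip{\xi_F}{\Omega}=\delta F$ and the $\xi_h$ identified above, the right-hand side of the proposition reads
\[
\cpb{h}{F}_c\,dx^1\wedge dx^2=\ip{\xi_h}{\delta F}\,dx^1\wedge dx^2,
\]
and expanding $\ip{\xi_h}{\delta F}$ by the graded interior product rules gives exactly the same expression $(\ip{\tilpartial_1}{\beta_2}-\ip{\tilpartial_2}{\beta_1})\,dx^1\wedge dx^2$, closing the argument.

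The delicate point I expect to encounter is the ambiguity in $\xi_h$: the Hamiltonianicity condition determines $\xi_h$ only up to the (typically large) kernel of $\ip{\cdot}{\Omega}$ in the multisymplectic setting, so the identity is really established for the distinguished representative dictated by the covariant Hamilton equation. Independence of $\cpb{h}{F}_c$ on this choice would require an additional compatibility between that kernel and the subclass of Hamiltonian forms, a known subtlety in the theory of multisymplectic Poisson brackets rather than a gap in the direct verification above.
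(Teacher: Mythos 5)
Your argument is correct, and it rests on the same two pillars as the paper's own proof: the (covariant/multiform) Hamilton equations and the homotopy identity $\tilpartial_j=\delta\,\ip{\tilpartial_j}{}+\ip{\tilpartial_j}{\delta}$. The difference is which Hamiltonian vector field gets contracted. The paper obtains the Proposition as the $n=2$ specialisation of Theorem \ref{df:hamiltonequations} and its corollary: it contracts $\delta \Ham=\sum_j dx^j\wedge\ip{\tilpartial_j}{\Omega}$ with $\xi_F$, uses $\ip{\xi_F}{\Omega}=\delta F$ to arrive at $\sum_j dx^j\wedge\ip{\tilpartial_j}{\delta F}=dF$, and then identifies $\ip{\xi_F}{\delta H_{ij}}$ with $\cpb{H_{ij}}{F}$ by graded antisymmetry, so an explicit $\xi_h$ is never needed and the ambiguity you raise in your last paragraph simply does not enter. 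You instead evaluate the bracket on the $\xi_h$ side, which obliges you to exhibit a representative; your $\xi_h=\tilpartial_2\wedge\partial_1-\tilpartial_1\wedge\partial_2$ is a correct and rather illuminating choice (it shows the covariant Hamilton equation itself furnishes an on-shell witness of the Hamiltonianicity of $h$), and your sign bookkeeping checks out. Two remarks. First, the residual ambiguity you flag needs no extra compatibility hypothesis: since $F$ is Hamiltonian, $\ip{\xi_h}{\ip{\xi_F}{\Omega}}$ can be rewritten, up to sign, as $\ip{\xi_F}{\delta h}$, which depends only on $h$ and not on the choice of $\xi_h$ --- this is exactly the relation $\cpb{H}{P}=\ip{\xi_H}{\delta P}=\ip{\xi_P}{\delta H}$ recorded in Section \ref{multiPB}. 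Second, the covariant Hamilton equation holds only modulo the Euler--Lagrange equations, so your identification of $\xi_h$, and hence the final identity, is an on-shell statement; this matches how the general theorem is stated in the paper and should be made explicit in your write-up.
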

This is of course the multisymplectic analog of the well-known equation in Hamiltonian mechanics $dF=\{H,F\}dt$ giving the time evolution of a function $F$ on the phase space under the Hamiltonian flow of $H$. 
In this paper, we will give this statement and a proof in the more general setting of Section \ref{multiPB}, from which the above can be recovered by setting $n=2$.

\subsection{The multiform Hamilton equations and the symplectic multiform}\label{multi_Ham}

The  main observation at the basis of this paper is that the objects and results reviewed in Section \ref{multi} can be extended to a Lagrangian multiform, \ie a horizontal $2$-form
\begin{equation}
\Lag[u] = \sum_{i<j}^n L_{ij}[u]\, dx^{ij}, \qquad n > 2\,,
\end{equation}
required to satisfy a generalised variational principle associated to the action 
\begin{eqnarray}
S[u,\sigma]=\int_\sigma \Lag[u]\,,
\end{eqnarray}
as explained in the introduction.
We can now turn our attention to the generalisation of the form $\omegone$ in \eqref{def_omegaone}. 
We first use the following result from \cite[Proposition 6.3]{Vermeeren_2018} and \cite{Vermeeren_2020}, which we reproduce here with a little change of notation.
\begin{prp}\label{omega1:prop}
The field $u$ is a critical point of $S[u,\sigma]=\int_\sigma \Lag[u]$ for all (smooth) surface $\sigma$ in $\mathbb{R}^n$ if and only if there exists a (nonzero) form $\omegone$ such that
\begin{equation}
\delta \Lag[u] = - d \omegone\,.
\end{equation}
\end{prp}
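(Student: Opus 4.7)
The plan is to adapt the classical derivation of the Euler--Lagrange equations to the $2$-form Lagrangian setting, using the variational bicomplex identity \eqref{ddeltagenerator} as a formal integration by parts. First I would expand
\begin{equation*}
\delta \Lag = \sum_{i<j}\,\sum_{k,(\alpha)} \parder{L_{ij}}{u^k_{(\alpha)}}\, \delta u^k_{(\alpha)} \wedge dx^{ij},
\end{equation*}
and then iterate \eqref{ddeltagenerator} to move all horizontal derivatives off each $\delta u^k_{(\alpha)}$. Each such step generates a compensating total horizontal derivative; collecting all those boundary contributions into a single $(1,1)$-form $\omegone$, one arrives at a canonical decomposition
\begin{equation*}
\delta \Lag = \sum_{i<j,\,k} E^{ij}_k[u]\, \delta u^k \wedge dx^{ij} - d\omegone,
\end{equation*}
in which the coefficients $E^{ij}_k[u]$ are face-wise variational derivatives; these are precisely (a repackaging of) the multiform Euler--Lagrange expressions of \cite{Suris_Vermeeren_2016, Sleigh_Nijhoff_Caudrelier_2019_2}.

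The sufficiency direction is then immediate via Stokes: if $\delta \Lag = -d\omegone$, then $\delta S[u,\sigma] = -\int_{\partial\sigma}\omegone$, which vanishes for arbitrary variations compactly supported away from $\partial\sigma$, so $u$ is critical for every smooth $\sigma$.

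For the necessity direction, I would test the critical-point hypothesis against a sufficiently rich family of surfaces and variations. Concretely, fixing a pair $i<j$, a point $p\in\R^n$ and an index $k$, I would take $\sigma$ to be a small rectangular patch of the coordinate $2$-plane $\{x^i,x^j\}$ centred at $p$, and $\delta u^k$ a bump function localised in the interior of $\sigma$. The boundary term drops and $\delta S$ reduces to a bulk integral of $E^{ij}_k[u]\,\delta u^k$, so the fundamental lemma of the calculus of variations forces $E^{ij}_k[u]=0$ at $p$. Letting $p$, $k$ and $(i,j)$ range, the essential piece of $\delta \Lag$ vanishes identically on $u$, and the decomposition collapses to $\delta \Lag = -d\omegone$.

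The main obstacle is the bookkeeping of the integration by parts when the $L_{ij}$ depend on derivatives of $u$ in directions transverse to $x^i$ and $x^j$: such cross-derivatives generate contributions proportional to $\delta u^k \wedge dx^{k\ell}$ for $\{k,\ell\}\neq\{i,j\}$, and these off-face terms must be carefully reshuffled into the correct face-wise coefficients $E^{ij}_k$ in order for the decomposition above to be genuinely canonical and for testing against coordinate $2$-planes to isolate them one at a time. The nonzero requirement on $\omegone$ merely excludes the degenerate case in which $\Lag$ is already $\delta$-closed, and so does not affect the logical structure of the argument.
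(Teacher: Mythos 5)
The paper itself offers no proof of this proposition: it is quoted from \cite[Proposition 6.3]{Vermeeren_2018} and \cite{Vermeeren_2020}. Judged on its own terms, your sufficiency direction (Stokes plus compactly supported variations) is fine, but the necessity direction rests on a decomposition that does not exist. You claim that repeated use of \eqref{ddeltagenerator} reduces $\delta\Lag$ to $\sum_{i<j,k}E^{ij}_k\,\delta u^k\wedge dx^{ij}-d\omegone$ with only \emph{undifferentiated} $\delta u^k$ in the residual. This is impossible in general: $d$ applied to a $(1,1)$-form $g\,\delta u^k\wedge dx^m$ produces only terms $g\,\delta u^k_{e_p}\wedge dx^{pm}$, in which the derivative direction $p$ always lies \emph{in} the resulting face. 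Hence a term $f\,\delta u^k_{e_l}\wedge dx^{ij}$ with $l\notin\{i,j\}$ can never be absorbed into $d(\cdot)$ plus face-wise terms; it survives in the residual ${\cal E}(\Lag)$. The paper's own pKdV computation exhibits exactly this: ${\cal E}(\Lag)$ there contains $\delta v_1\wedge dx^{23}$, $\delta v_3\wedge dx^{23}$, $\delta v_{111}\wedge dx^{23}$, etc., whose coefficients give equations ($v_2=0$, $v_3-v_{111}-3v_1^2=0$, \dots) that are \emph{not} of the form ``face-wise variational derivative of $L_{ij}$ equals zero.''

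This defect propagates into your test family. Restricting to coordinate $2$-planes with bump-function variations, the fundamental lemma only forces the planar Euler--Lagrange expressions $E^{ij}_k$ to vanish; it does not produce the additional multiform Euler--Lagrange equations (the transversality and corner conditions of \cite{Suris_Vermeeren_2016}), which is precisely the extra content that distinguishes criticality on \emph{all} smooth surfaces from criticality on flat coordinate planes. Those extra conditions are extracted in the cited proofs by varying over stepped or curved surfaces that pass from one coordinate plane to another. Without them the residual ${\cal E}(\Lag)$ does not vanish and you cannot conclude $\delta\Lag=-d\omegone$, so the ``only if'' direction is not established. To repair the argument you would need to (i) state the correct form of the residual, which necessarily contains $\delta u^k_{(\alpha)}$ with $(\alpha)$ transverse to the face, and (ii) enlarge the family of test surfaces beyond coordinate planes so as to kill all of these coefficients.
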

We also recall that, as explained in the introduction we have that $u$ is a critical point of $S[u,\sigma]$ for all (smooth) surface $\sigma$ if and only if $\delta d\Lag=0$. Equipped with this, let us write (dropping the dependence on $u$ for conciseness),
\begin{eqnarray}
{\cal E}(\Lag):=\delta \Lag + d \omegone\,.
\end{eqnarray}
Then, a reformulation of the previous discussion is as follows:
\begin{eqnarray}
\label{reformulation}
\delta d\Lag=0\Leftrightarrow \text{$u$ is a critical point of $S[u,\sigma]$ for all (smooth) surface $\sigma$ in $\mathbb{R}^n$} \Leftrightarrow{\cal E}(\Lag)=0\,.
\end{eqnarray}
Compared to the case of \eqref{def_omegaone}, in addition to the nonuniqueness of $\omegone$ induced by the freedom of adding a total differential $d\omega$ to $\Lag$ (as for a standard Lagrangian volume form), there is also some freedom in the integration by parts steps which lead to the expression 
\begin{eqnarray}
\label{decomp1}
\delta \Lag={\cal E}(\Lag)- d \omegone\,.
\end{eqnarray}
More precisely, in general we could also have another way of writing $\delta \Lag$,
\begin{eqnarray}
\label{decomp2}
\delta \Lag=\widetilde{\cal E}(\Lag)- d \widetilde\omegone\,,
\end{eqnarray}
with still $\widetilde{\cal E}(\Lag)=0 \Leftrightarrow \delta d\Lag=0$, following from Proposition \ref{omega1:prop} and reformulation \eqref{reformulation}. We will show that these two sources of freedom have no consequence on our constructions. 
Equipped with a pair $(\Lag,\omegone)$, we define the Hamiltonian multiform associated to it.
\begin{dfn}[Hamiltonian multiform]
The \textbf{Hamiltonian multiform} associated to the pair $(\Lag,\omegone)$ is defined by
\begin{equation}
\Ham = - \Lag + \sum_{j=1}^n dx^j \wedge \ip{\tilpartial_j }{\omegone}.
\end{equation}
\end{dfn}
As announced, this definition looks very similar to the definition of the covariant Hamiltonian in \eqref{def_cov_Ham}. However note that the sum involves $n$ terms here (the number of independent variables included in the Lagrangian multiform) and that $\Ham$ has the form $\displaystyle \Ham = \sum_{i<j} H_{ij} dx^{ij}$ and is in ${\cal A}^{(0,2)}$, like $\Lag$. $\Ham$ plays the role of the covariant Hamiltonian form in the multiform context. 
\begin{lmm}
\label{total_diff}
Let $\widetilde{\Lag}=\Lag+d\omega$ for some $\omega\in {\cal A}^{(0,1)}$, let $\Ham$ be the Hamiltonian multiform associated to the pair $(\Lag,\omegone)$ and $\widetilde{\Ham}$ the one associated to the pair $(\widetilde{\Lag},\omegone+\delta\omega)$. Then,
\begin{equation}
    \widetilde{\Ham}=\Ham\,.
\end{equation}
\end{lmm}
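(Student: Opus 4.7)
The plan is to substitute the replacements directly into the definition of $\widetilde{\Ham}$ and reduce the whole claim to a single identity about horizontal $1$-forms in the variational bicomplex. Expanding the definition gives
$$\widetilde{\Ham} = -(\Lag + d\omega) + \sum_{j=1}^n dx^j \wedge \ip{\tilpartial_j}{\omegone} + \sum_{j=1}^n dx^j \wedge \ip{\tilpartial_j}{\delta\omega},$$
so that
$$\widetilde{\Ham} - \Ham = -d\omega + \sum_{j=1}^n dx^j \wedge \ip{\tilpartial_j}{\delta\omega}.$$
The lemma is therefore equivalent to establishing
$$d\omega = \sum_{j=1}^n dx^j \wedge \ip{\tilpartial_j}{\delta\omega} \qquad \text{for all } \omega \in {\cal A}^{(0,1)}.$$

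For this identity I would invoke the Cartan-type formula $\tilpartial_j = \delta\ip{\tilpartial_j} + \ip{\tilpartial_j}{\delta}$ recalled at the end of Section \ref{varbi}. Applied to a purely horizontal form $\omega$, the first term vanishes because $\tilpartial_j$ is purely vertical and $\omega$ has no $\delta u$ components, so $\ip{\tilpartial_j}{\omega}=0$. This leaves $\ip{\tilpartial_j}{\delta\omega} = \tilpartial_j\omega$. Writing $\omega = \sum_k f_k\, dx^k$ with $f_k \in {\cal A}$ and using that $\tilpartial_j$ annihilates $dx^k$ gives $\tilpartial_j\omega = \sum_k (\tilpartial_j f_k)\, dx^k$. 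Finally, the standing assumption that the coefficients of $\Lag$ (and hence of $\omega$) carry no explicit dependence on the $x^i$ allows us to replace $\tilpartial_j f_k$ by $\partial_j f_k$ via $\partial_i = \partial'_i + \tilpartial_i$; wedging with $dx^j$ and summing reconstructs $d\omega = \sum_{j,k}(\partial_j f_k)\, dx^j\wedge dx^k$, as required.

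As a consistency check one should verify that $(\widetilde{\Lag}, \omegone + \delta\omega)$ is a legitimate pair in the sense of Proposition \ref{omega1:prop}. This is immediate from
$$\delta\widetilde{\Lag} = \delta\Lag + \delta d\omega = \delta\Lag - d\delta\omega = {\cal E}(\Lag) - d(\omegone + \delta\omega),$$
which also shows that $\widetilde{\cal E}(\widetilde{\Lag}) = {\cal E}(\Lag)$, so the multiform Euler--Lagrange equations are unaffected by the shift. I do not foresee a genuine obstacle here; the only point requiring care is the bookkeeping of signs in the graded interior-product manipulations. The substantive content of the lemma is simply the observation that, under the no-explicit-$x$-dependence hypothesis, the operation $\omega \mapsto \sum_j dx^j \wedge \ip{\tilpartial_j}{\delta\omega}$ coincides with the horizontal differential $d$ on ${\cal A}^{(0,1)}$, which is exactly what makes $\Ham$ invariant under the combined gauge-like shift $(\Lag,\omegone)\mapsto(\Lag+d\omega,\omegone+\delta\omega)$.
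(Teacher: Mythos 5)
Your proof is correct and follows essentially the same route as the paper's: expand the definition, apply $\tilpartial_j = \delta\ip{\tilpartial_j} + \ip{\tilpartial_j}{\delta}$, use $\ip{\tilpartial_j}{\omega}=0$ for horizontal $\omega$, and invoke the no-explicit-$x$-dependence assumption to identify $\sum_j dx^j\wedge\tilpartial_j\omega$ with $d\omega$. The consistency check on the pair $(\widetilde{\Lag},\omegone+\delta\omega)$ is a welcome addition but not part of the paper's argument.
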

\begin{proof}
We have 
\begin{eqnarray}
\widetilde{\Ham}&=& - \Lag -d\omega + \sum_{j=1}^n dx^j \wedge \ip{\tilpartial_j }{(\omegone+\delta \omega)} \nonumber\\
&=&\Ham-d\omega + \sum_{j=1}^n dx^j \wedge \tilpartial_j \omega-\sum_{j=1}^n dx^j \wedge \delta \ip{\tilpartial_j }{\omega}\,,
\end{eqnarray}
where we have used the property $\tilpartial_j = \delta \ip{\tilpartial_j} + \ip{\tilpartial_j} \delta$. Now since $\omega\in{\cal A}^{(0,1)}$, $\ip{\tilpartial_j }{\omega}=0$ for $j=1,\dots,n$. Recall that we work with Lagrangians that do not depend explicitly on the space-time variables, hence neither does $\omega$ so $\displaystyle \sum_{j=1}^n dx^j \wedge \tilpartial_j \omega=d\omega$ and the result follows.
\end{proof}
The relevance of this lemma is related to the symplectic multiform defined below and the multiform Hamilton equations associated to it and $\Ham$.

We can easily see that there is a relation between the $d$-differential of $\Ham$ and the one of $\Lag$. The next result is important and connects the closure relation in the Lagrangian multiform to the Hamiltonian multiform formalism.
\begin{thrm}\label{closure_Ham}
$d \Ham = - 2 d \Lag$ modulo the multiform E-L equations.
\end{thrm}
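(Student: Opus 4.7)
The plan is to start from the definition $\Ham = -\Lag + \sum_{j=1}^n dx^j \wedge \ip{\tilpartial_j}{\omegone}$ and reduce everything to the relation $d\omegone = -\delta\Lag$, which is precisely the content of $\mathcal{E}(\Lag)=0$, i.e.\ the multiform Euler-Lagrange equations.

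First, I would apply $d$ to the definition of $\Ham$. Since $d(dx^j)=0$ and $dx^j$ is a horizontal one-form, Leibniz together with $d^2=0$ gives
\begin{equation*}
d\Ham = -d\Lag - \sum_{j=1}^n dx^j \wedge d\ip{\tilpartial_j}{\omegone}.
\end{equation*}
So the theorem reduces to showing $\sum_j dx^j \wedge d\ip{\tilpartial_j}{\omegone} = d\Lag$ on the E-L equations.

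The key technical ingredient is the graded anticommutation $d\,\ip{\tilpartial_j} + \ip{\tilpartial_j}\,d = 0$. This is the horizontal analogue of the vertical Cartan formula $\tilpartial_j = \delta\ip{\tilpartial_j}+\ip{\tilpartial_j}\delta$ recalled in Section \ref{varbi}: because $\tilpartial_j$ is a purely vertical vector field whose coefficients depend only on the $u_k^{(i)}$, the total Lie derivative $\mathcal{L}_{\tilpartial_j}$ has no horizontal component. The cleanest verification is on generators: using $d\,\delta u_k^{(i)} = -\sum_l \delta u_k^{(i)+e_l}\wedge dx^l$ and the definition of $\tilpartial_j$, a direct computation yields $d\,\ip{\tilpartial_j}{\delta u_k^{(i)}} = \sum_l u_k^{(i)+e_j+e_l}\,dx^l = -\ip{\tilpartial_j}{d\,\delta u_k^{(i)}}$, after which the identity extends by derivation to all of $\mathcal{A}^{(p,q)}$. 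Consequently
\begin{equation*}
d\,\ip{\tilpartial_j}{\omegone} = -\ip{\tilpartial_j}{d\omegone}.
\end{equation*}

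Substituting the E-L relation $d\omegone = -\delta\Lag$ and applying once more the vertical Cartan identity $\tilpartial_j = \delta\ip{\tilpartial_j}+\ip{\tilpartial_j}\delta$, together with the fact that $\Lag$ is purely horizontal (so $\ip{\tilpartial_j}{\Lag}=0$), gives
\begin{equation*}
d\,\ip{\tilpartial_j}{\omegone} = \ip{\tilpartial_j}{\delta\Lag} = \tilpartial_j\Lag - \delta\ip{\tilpartial_j}{\Lag} = \tilpartial_j\Lag.
\end{equation*}
Since $\Lag$ carries no explicit space-time dependence, $\partial_j=\tilpartial_j$ on its coefficients, so $d\Lag = \sum_j dx^j\wedge\tilpartial_j\Lag$. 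Summing the previous display against $dx^j\wedge$ therefore yields $\sum_j dx^j\wedge d\,\ip{\tilpartial_j}{\omegone} = d\Lag$, and plugging into the first display gives $d\Ham = -d\Lag - d\Lag = -2d\Lag$ on the multiform E-L equations.

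The only mildly delicate point is justifying the anticommutation $\{d,\ip{\tilpartial_j}\}=0$; once that is in hand, the remainder is a short chain of identities already recalled in Sections \ref{varbi}--\ref{multi_Ham}. In particular, the freedom in choosing $\omegone$ discussed around \eqref{decomp1}--\eqref{decomp2} plays no role, since the argument only uses that $\omegone$ satisfies $\delta\Lag + d\omegone = 0$ on E-L.
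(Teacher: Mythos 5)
Your proof is correct and follows essentially the same route as the paper's: apply $d$ to the definition of $\Ham$, use the anticommutation $d\,\ip{\tilpartial_j}{}+\ip{\tilpartial_j}{d}=0$, substitute the Euler--Lagrange relation $d\omegone=-\delta\Lag$, and conclude with the identity $\tilpartial_j=\delta\ip{\tilpartial_j}{}+\ip{\tilpartial_j}{\delta}$ together with the horizontality and lack of explicit space-time dependence of $\Lag$. The only addition is your explicit verification of the anticommutation identity on generators, which the paper uses without proof.
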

\begin{proof}
We start from the definition of $\Ham$:
\begin{equation}
d \Ham = - d \Lag + d \left(  \sum_{j=1}^n dx^j \wedge \ip {\tilpartial_j }\omegone \right) = - d \Lag - \sum_{j=1}^n dx^j \wedge d \ip {\tilpartial_j }\omegone = - d \Lag +  \sum_{j=1}^n dx^j \wedge \ip{ \tilpartial_j }d \omegone
\end{equation}
where we used $d \ip{ \tilpartial_j }+ \ip {\tilpartial_j }d = 0$. Now we use the equation $\delta \Lag = - d \omegone$ to obtain
\begin{equation}
\begin{gathered}
d \Ham = - d \Lag - \sum_{j=1}^n dx^j \wedge \ip {\tilpartial_j }\delta \Lag= - d \Lag - \sum_{j=1}^n dx^j \wedge( \tilpartial_j - \delta \ip{ \tilpartial_j} ) \Lag\\ = - d \Lag -  \sum_{j=1}^n dx^j \wedge \tilpartial_j  \Lag = - 2 d \Lag.
\end{gathered}
\end{equation} 
In the last line we used the property $\tilpartial_j = \delta \ip{\tilpartial_j} + \ip{\tilpartial_j} \delta$, and the fact that $\Lag$ is purely horizontal and does not depend explicitly on the space-time variables. 
\end{proof}
We remark that in \cite{Suris_Vermeeren_2016,Vermeeren_2020} the closure of a pluri-Lagrangian form $\Lag$ was linked to the involution of the single-time Hamiltonians, and in \cite{Vermeeren_2020} an analogue of Theorem \ref{closure_Ham} for the case of Lagrangian 1-forms was given. In the particular case where the Hamiltonian multiform is a Hamiltonian form in the sense defined below, we expect Theorem \ref{closure_Ham} to provide a general framework in which to recast these results (with appropriate modifications for the examples in $0+1$ dimensions presented in \cite{Suris_Vermeeren_2016,Vermeeren_2020}). This point is left for future investigation. Recalling that a Lagrangian multiform is defined to satisfy the closure relation on the equations of motion, we obtain:
\begin{crll}[Closedness of $\Ham$]
The Hamiltonian multiform is horizontally closed on the multiform E-L equations $d \Ham =0$. In other words, $\Ham$ satisfies the closure relation.
\end{crll}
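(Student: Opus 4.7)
The corollary is almost immediate given what has been established, so the plan is essentially a two-line composition of results. First, I would invoke Theorem \ref{closure_Ham} to get the identity $d\Ham = -2\, d\Lag$ holding modulo the multiform Euler--Lagrange equations ${\cal E}(\Lag)=0$ (equivalently $\delta d\Lag =0$, by the reformulation \eqref{reformulation}). This converts the question about $d\Ham$ into a question about $d\Lag$ on the same equations.

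Next, I would invoke Definition \ref{def_Lag_multiform}: by hypothesis $\Lag$ is a Lagrangian multiform, which means precisely that $\delta d\Lag =0$ implies $d\Lag =0$. Since the multiform E-L equations are exactly $\delta d \Lag =0$, we obtain $d\Lag =0$ on them, and therefore $d\Ham = -2\,d\Lag = 0$ on the multiform E-L equations. This is the closure relation for $\Ham$.

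There is no real obstacle here: the content of the corollary is already encoded in Theorem \ref{closure_Ham} together with the definition of a Lagrangian multiform, and no new computation is needed. The only point worth stating explicitly in the write-up is the chain of equivalences in \eqref{reformulation} that identifies ``modulo the multiform E-L equations'' with ``on the locus $\delta d\Lag=0$'', so that Definition \ref{def_Lag_multiform} can be applied verbatim. Thus the proof should consist of at most two sentences, and I would simply present it as a direct corollary.
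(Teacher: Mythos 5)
Your proposal is correct and matches the paper's reasoning exactly: the corollary is obtained by combining Theorem \ref{closure_Ham} ($d\Ham=-2\,d\Lag$ on the multiform E-L equations) with Definition \ref{def_Lag_multiform}, which guarantees $d\Lag=0$ there. The paper presents this with the same brevity, simply recalling that a Lagrangian multiform satisfies the closure relation by definition before stating the corollary.
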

These results justify our terminology {\it Hamiltonian multiform} since we have the closure relation for $\Ham$ if and only if it holds for $\Lag$.
This corollary is the multiform equivalent of the well known fact in finite-dimensional mechanics that the Hamiltonian is a conserved quantity $\frac{dH}{dt}=0$ (recall that we do not include explicit dependence on the independent variables here).\\

We are now in a position to introduce the multiform analog of the multisymplectic form \eqref{def_multisymplectic_form}, again denoting it by $\Omega$.
\begin{dfn}
The \textbf{symplectic multiform} associated to $\omegone$ is defined as $\Omega = \delta \omegone$.
\end{dfn}
It is clear from Lemma \ref{total_diff} that adding a total differential $d\omega$ to $\Lag$, which amounts to adding $\delta \omega$ to $\omegone$, has no consequence on $\Omega$.

The reader will hopefully forgive us for the choice of terminology, very similar to multisymplectic form. Another candidate, polysymplectic form, is already in use in the literature. We could not simply keep 
multisymplectic form for our new object since, although both objects are derived in a similar fashion and play a similar role in the theory, they are quite different in structure. Indeed, recall that the multisymplectic form for a theory with $n$ independent variables would be in ${\cal A}^{(2,n-1)}$ whereas our symplectic multiform is in ${\cal A}^{(2,1)}$ so they only coincide in the case where $n=2$ (for our case of $1+1$ field theories). The symplectic multiform is of the form
\begin{equation}\label{omega:decomp}
\Omega = \sum_{j=1}^n \omega_j \wedge dx^j, \qquad \omega_j \in \mathcal{A}^{(2,0)}\,,~~j=1,\dots,n.
\end{equation}
The following corollary gives support for our terminology as it is reminiscent of the fact that a symplectic form $\omega$ is closed in classical mechanics. 
\begin{crll}
The symplectic multiform is horizontally closed on the multiform E-L equations
\begin{equation}
\delta d \Lag = 0 \qquad \implies \qquad d \Omega =0.
\end{equation}
\end{crll}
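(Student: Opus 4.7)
The plan is to reduce the claim $d\Omega = 0$ directly to the relation between $\omegone$ and $\Lag$ provided by Proposition \ref{omega1:prop}, together with the anticommutation $d\delta = -\delta d$ of the variational bicomplex.

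First, I would unfold the definition: since $\Omega = \delta \omegone$, we have
\begin{equation}
d\Omega = d\delta \omegone = -\delta d \omegone,
\end{equation}
using $d\delta + \delta d = 0$. The task therefore becomes to rewrite $d\omegone$ in a form whose vertical differential manifestly vanishes.

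Second, I would invoke the equivalence summarised in \eqref{reformulation}: the hypothesis $\delta d \Lag = 0$ is exactly the condition ${\cal E}(\Lag) = 0$, i.e.\ the multiform Euler--Lagrange equations. By the decomposition \eqref{decomp1}, this is precisely the statement that on solutions
\begin{equation}
d\omegone = -\delta \Lag.
\end{equation}
Substituting back, we get $d\Omega = -\delta d\omegone = \delta^2 \Lag = 0$, where the final equality is the defining property $\delta^2 = 0$ of the vertical differential. This yields the corollary.

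There is essentially no obstacle here: the result is a two-line computation once the right identities are assembled. The only minor subtlety to flag is the independence of the conclusion from the non-uniqueness in the decomposition $\delta\Lag = {\cal E}(\Lag) - d\omegone$: had we instead worked with an alternative pair $(\widetilde{\cal E}(\Lag), \widetilde\omegone)$ as in \eqref{decomp2}, the same argument would give $d\widetilde\Omega = \delta^2 \Lag = 0$, so the closedness of the symplectic multiform on solutions does not depend on the choices made during integration by parts (consistent with Lemma \ref{total_diff} for the shift by a total differential). I would mention this briefly for completeness but would not dwell on it, since the core argument is just the interplay of $d\delta = -\delta d$ with Proposition \ref{omega1:prop}.
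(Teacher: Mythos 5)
Your argument is correct and is essentially identical to the paper's proof: both use Proposition \ref{omega1:prop} to write $\delta\Lag = -d\omegone$ on the equations of motion and then combine $\delta^2 = 0$ with $d\delta = -\delta d$ to conclude $d\Omega = \delta^2\Lag = 0$, merely reading the same chain of equalities in the opposite direction. The remark on independence from the choice of $\omegone$ is a harmless addition not present in the paper's proof.
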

\begin{proof}
The equations are equivalent to $\delta \Lag = -d \omegone $, so
\begin{equation}
0=\delta^2 \Lag = -\delta d \omegone = d \delta \omegone = d \Omega. \qedhere
\end{equation}
\end{proof}
The symplectic multiform $\Omega$ achieves an important unification of the various (standard and dual) symplectic structures appearing in an integrable hierarchy, as originally observed in \cite{Avan_Caudrelier_Doikou_Kundu_2016}. When $x_1$ is chosen to be the $x$ variable and $x_j$, $j\ge 2$ to be the higher times $t_j$ of the hierarchy then $\omega_1$ represents (up to a sign) the usual symplectic form, while each $\omega_j$, $j \ne 1$ represents the dual symplectic form related to the time $t_j$. For each $j\ge 2$, the multisymplectic form $\Omega_{1j}$ which would be obtained by considering the Lagrangian $L_{1j}$ as a standalone Lagrangian, as in Section \ref{multi}, is simply obtained by taking $\omega_1 \wedge dx^1 + \omega_j\wedge dx^j$. 

We now use the symplectic multiform to obtain the {\it multiform Hamilton equations}.
\begin{prp}[multiform Hamilton equations]
The multiform Euler-Lagrange equations for the Lagrangian multiform $\Lag$ are equivalent to
\begin{equation}\label{Hamilton:equations}
\delta \Ham = \sum_{j=1}^n dx^j \wedge \ip{\tilpartial_j}{\Omega}.
\end{equation}
\end{prp}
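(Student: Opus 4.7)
The plan is to compute $\delta\Ham$ directly from its definition and manipulate the result so that the right-hand side $\sum_j dx^j\wedge\ip{\tilpartial_j}{\Omega}$ emerges naturally, with the leftover terms being precisely ${\cal E}(\Lag)=\delta\Lag+d\omegone$. Since Proposition \ref{omega1:prop} (together with the reformulation \eqref{reformulation}) tells us that ${\cal E}(\Lag)=0$ is equivalent to the multiform E-L equations, this will establish the desired equivalence in both directions simultaneously.

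Concretely, I would apply $\delta$ to $\Ham=-\Lag+\sum_{j=1}^n dx^j\wedge\ip{\tilpartial_j}{\omegone}$. Using $\delta dx^j=0$ and the graded Leibniz rule (with $dx^j$ of bidegree $(0,1)$ contributing a sign $-1$), we get
\begin{equation*}
\delta\Ham=-\delta\Lag-\sum_{j=1}^n dx^j\wedge \delta\ip{\tilpartial_j}{\omegone}.
\end{equation*}
The crucial step is to apply the identity $\tilpartial_j=\delta\ip{\tilpartial_j}+\ip{\tilpartial_j}\delta$ in the form $\delta\ip{\tilpartial_j}{\omegone}=\tilpartial_j\omegone-\ip{\tilpartial_j}{\delta\omegone}=\tilpartial_j\omegone-\ip{\tilpartial_j}{\Omega}$, exactly as was done in the proof of Theorem \ref{closure_Ham}. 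Substituting this in produces the target term $+\sum_j dx^j\wedge\ip{\tilpartial_j}{\Omega}$ and leaves behind $-\delta\Lag-\sum_j dx^j\wedge\tilpartial_j\omegone$.

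The remaining step I expect to be the mildly technical one is recognising that $\sum_j dx^j\wedge\tilpartial_j\omegone=d\omegone$. This requires using that $\omegone$ inherits from $\Lag$ the absence of explicit dependence on the $x^i$, so that $\partial_j$ coincides with $\tilpartial_j$ on coefficients, and carefully checking that the $\delta u_k^{(i)+e_j}$ terms generated by the action of $\tilpartial_j$ on the vertical generators reproduce the $-\delta u_k^{(i)+e_j}\wedge dx^j$ terms coming from \eqref{ddeltagenerator} in $d\omegone$. With this in hand,
\begin{equation*}
\delta\Ham=-\delta\Lag-d\omegone+\sum_{j=1}^n dx^j\wedge\ip{\tilpartial_j}{\Omega}=-{\cal E}(\Lag)+\sum_{j=1}^n dx^j\wedge\ip{\tilpartial_j}{\Omega}.
\end{equation*}

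Finally, invoking Proposition \ref{omega1:prop} and the chain of equivalences \eqref{reformulation}, the Hamilton equations \eqref{Hamilton:equations} are equivalent to ${\cal E}(\Lag)=0$, which is equivalent to $\delta d\Lag=0$, i.e.\ the multiform Euler--Lagrange equations. The overall argument is purely algebraic manipulation within the variational bicomplex, so I do not anticipate serious difficulty; the only subtlety is the identity $\sum_j dx^j\wedge\tilpartial_j\omegone=d\omegone$, and this can be imported from the bookkeeping already carried out in the proof of Theorem \ref{closure_Ham}. A brief remark should also note that the result is independent of the non-uniqueness \eqref{decomp1}--\eqref{decomp2} of the decomposition, since replacing $(\omegone,{\cal E})$ by $(\widetilde\omegone,\widetilde{\cal E})$ simply substitutes one pair of equivalent conditions for another.
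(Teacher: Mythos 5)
Your proposal is correct and follows essentially the same route as the paper: both start from $\delta\Ham=-\delta\Lag-\sum_j dx^j\wedge\delta\ip{\tilpartial_j}{\omegone}$, invoke the Cartan-type identity $\tilpartial_j=\delta\ip{\tilpartial_j}+\ip{\tilpartial_j}\delta$ together with $\partial'_j\omegone=0$ to trade $\sum_j dx^j\wedge\tilpartial_j\omegone$ for $d\omegone$, and conclude via Proposition \ref{omega1:prop}. The only (cosmetic) difference is that you keep ${\cal E}(\Lag)$ explicit to obtain an off-shell identity before imposing the equations of motion, whereas the paper substitutes $\delta\Lag=-d\omegone$ on-shell first and cancels.
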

\begin{proof}
The proof is a simple adaptation of the similar result obtained in \cite[Chapter 19]{Dickey_2003} to the multiform case. From the definition of $\Ham$ we get
\begin{equation}
\delta \Ham = - \delta \Lag - \sum_{j=1}^n dx^j \wedge \delta  \ip{\tilpartial_j}\omegone.
\end{equation}
Thanks to Proposition \ref{omega1:prop} the equations of motion are equivalent to 
\begin{equation}
\delta \Ham = d\omegone - \sum_{j=1}^n dx^j \wedge \delta  \ip{\tilpartial_j}\omegone = \sum_{j=1}^n dx^j \wedge (\partial'_j+\ip{\tilpartial_j} \delta + \delta \ip{\tilpartial_j})\omegone- \sum_{j=1}^n dx^j \wedge \delta  \ip{\tilpartial_j}\omegone.
\end{equation}
$\omegone$ does not depend explicitly on the space-time variables so $\partial'_j \omegone =0$. The result is obtained by cancellation. Lemma \ref{total_diff} ensures that the freedom of adding a total differential to $\Lag$ has no consequence on the multiform Hamilton equations as it should. The other source of freedom coming from \eqref{decomp1}-\eqref{decomp2} does not affect the result either. Indeed, suppose that $\widetilde{\Ham}$ is the Hamiltonian multiform associated to the pair $(\Lag,\widetilde{\omegone})$ of \eqref{decomp2} and $\widetilde{\Omega}$ is associated to $\widetilde{\omegone}$ then exactly the same computation as above yields that the multiform Euler-Lagrange equations for the Lagrangian multiform $\Lag$ are equivalent to 
\begin{equation}
\delta \widetilde{\Ham} = \sum_{j=1}^n dx^j \wedge \ip{\tilpartial_j}{\widetilde{\Omega}}.
\end{equation}
\end{proof}

\subsection{Multi-time Poisson brackets and conservation laws}\label{multiPB}

Continuing with the inspiration given by covariant Hamiltonian field, the next step is to try to construct a Poisson bracket related to our symplectic multiform and investigate how the multiform Hamilton equations can be cast into Poisson Bracket form. Similarly to the situation reviewed at the end of Section \ref{multi}, this can only be done for a restricted class of forms, called Hamiltonian forms. For convenience, we restrict our attention to horizontal forms as this is sufficient for our purposes.
\begin{dfn}[Hamiltonian forms]
We will say that a horizontal form $P$ is \emph{Hamiltonian} if there exists a (multi)vector field $\xi_P$ such that 
\begin{equation}
\xi_P \lrcorner \Omega = \delta P.
\end{equation}
$\xi_P$ is called the \emph{Hamiltonian vector field} related to $P$.
\end{dfn}
\begin{prp}
$P$ can be a non-trivial Hamiltonian form only if either $P \in \mathcal{A}$ or $P \in \mathcal{A}^{(0,1)}$.
\end{prp}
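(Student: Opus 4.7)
The plan is a bidegree-counting argument in the variational bicomplex. Write $P \in \mathcal{A}^{(0,q)}$ since $P$ is a horizontal form, and record that consequently $\delta P \in \mathcal{A}^{(1,q)}$. On the other side, $\omegone \in \mathcal{A}^{(1,1)}$ forces $\Omega = \delta \omegone \in \mathcal{A}^{(2,1)}$. The goal is to show that the equation $\xi_P \lrcorner \Omega = \delta P$ can only be solved for $q=0$ or $q=1$.

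To see this, first decompose a general multivector field $\xi_P$ into its bihomogeneous components: up to $\mathcal{A}$-linear combinations, the building blocks are $r$-fold wedges $\eta_1 \wedge \dots \wedge \eta_r$ where each $\eta_i$ is either a vertical field $\partial_{u_k^{(i)}}$ or a horizontal field $\partial_j$. If such a component has $a$ vertical and $b$ horizontal factors, then the interior product lowers the vertical degree of $\Omega$ by $a$ and its horizontal degree by $b$, so this contribution lies in $\mathcal{A}^{(2-a,\,1-b)}$. For this to match the bidegree $(1,q)$ of $\delta P$, I need $a=1$ and $b = 1-q$, and the constraint $b \geq 0$ immediately yields $q \leq 1$.

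Next I will argue that inhomogeneous $\xi_P$ give no extra flexibility: since the bigrading splits $\mathcal{A}^*$ as a direct sum, the equation $\xi_P \lrcorner \Omega = \delta P$ can be projected onto each bidegree, so only the $(a,b)=(1,1-q)$ component of $\xi_P$ contributes and the other components must pair with $\Omega$ to give zero (which we may absorb by taking $\xi_P$ bihomogeneous of that type). This confirms $q \in \{0,1\}$, with the two cases corresponding respectively to $\xi_P$ being a bivector field of the form $\partial_{u_k^{(i)}} \wedge \partial_j$ (as flagged in the paper after equation~\eqref{vector_field}) and to $\xi_P$ being a purely vertical vector field.

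The only subtle point, and the one I expect needs a clean sentence, is the word \emph{non-trivial}: for $q \geq 2$ the equation $\xi_P \lrcorner \Omega = \delta P$ forces both sides to vanish, so the statement should be read as asserting that the Hamiltonian condition becomes vacuous (equivalently, $\delta P = 0$, hence $P$ is essentially constant) outside the two listed bidegrees. I would make this explicit by concluding: if $q \geq 2$, then $\delta P$ has horizontal degree $\geq 2$ while $\xi_P \lrcorner \Omega$ has horizontal degree $\leq 1$, so $\delta P = 0$ and by exactness of the vertical complex $P$ reduces to a trivial (locally constant) form, which is excluded by the non-triviality assumption.
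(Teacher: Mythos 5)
Your argument is correct and is essentially the paper's own proof: a bidegree count showing that $\xi_P \lrcorner \Omega \in \mathcal{A}^{(2-a,\,1-b)}$ must match $\delta P \in \mathcal{A}^{(1,q)}$, so $b = 1-q \geq 0$ forces $q \leq 1$. Your additional remarks on projecting inhomogeneous multivector fields onto bidegree components and on what ``non-trivial'' excludes when $q \geq 2$ are sensible elaborations of the same counting argument rather than a different route.
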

\begin{proof}
The proof follows from a simple counting argument. Suppose $P \in \mathcal{A}^{(0,s)}$. Then, since $\Omega \in \mathcal{A}^{(2,1)}$, in order for a $(p,q)$-vector field $\xi_P$ to exist such that
\begin{equation}
\ip{\xi_P}{\Omega} = \delta P
\end{equation}
then necessarily $2-p = 1$ and $1-q = s$. So $p=1$ and $q = 1 -s \ge 0$, and therefore $s$ can only be 0 or 1.
\end{proof}
 We now produce a statement that is similar to \cite[Proposition 2]{Caudrelier_Stoppato_2020}, but for the multiform case. The proof is easily obtained as an extension. We will use this result systematically without quoting it in our examples below. 

 Let us denote by $S_\Omega$ the set of basis elements $\delta u_l^{(i)}$ that appear explicitly the symplectic multiform. It is a finite set since $\Omega$ is derived from $\Lag$ which is assumed to depend on $u_l^{(i)}$ with $|i|\le m$ for some $m$ (finite jet dependence). Hence, we can assume some ordering on $S_\Omega$ such that we can label the $\delta u_l^{(i)}$'s as $\delta v_j$, $j=1,\dots,|S_\Omega|$. We then write
\begin{equation}\label{smform:localcoordinates}
\Omega=\sum_{k=1}^n \sum_{\substack{i<j\\i,j \in I_k}}\omega_k^{ij} \delta v_i \wedge \delta v_j \wedge dx^k
\end{equation}
where $I_k \subseteq \{1,\dots,|S_\Omega|\}$ for each $k=1,\dots,n$. Note that each $\omega_{k}^{ij}\in{\cal A}$ so has a dependence on the local coordinates $u_m^{(j)}$ which we do not show explicitly.
\begin{prp}[Necessary form of a Hamiltonian one-form.]
Suppose $\displaystyle F= \sum_{k=1}^n F_k\, dx^k$, where $F_{k}\in{\cal A}$, is a Hamiltonian form for the multisymplectic form \eqref{smform:localcoordinates}. Then, for each $k=1,\dots,n$, $F_k$ can only depend (at most) on $v_j$, $j \in I_k$.
\end{prp}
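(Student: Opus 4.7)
My plan is to exploit the Hamiltonianicity equation $\xi_F \lrcorner \Omega = \delta F$ directly by expanding both sides in the basis of $(1,1)$-forms and reading off the coefficients. Since $F$ is a horizontal one-form, $\delta F$ lies in $\mathcal{A}^{(1,1)}$, and since $\Omega \in \mathcal{A}^{(2,1)}$, the Hamiltonian multivector field $\xi_F$ must reduce form bidegree by $(1,0)$; in other words, it is a purely vertical vector field of the form
\begin{equation*}
\xi_F = \sum_{\ell,(i)} \alpha_{\ell,(i)}\, \partial_{u_\ell^{(i)}}, \qquad \alpha_{\ell,(i)}\in\mathcal{A}.
\end{equation*}
The key observation is that $\partial_{u_\ell^{(i)}} \lrcorner \Omega$ vanishes unless $\delta u_\ell^{(i)}$ actually appears in $\Omega$, i.e.\ unless $u_\ell^{(i)}$ is one of the $v_j$'s in $S_\Omega$. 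Consequently, only the components of $\xi_F$ along $\partial_{v_j}$, $j=1,\dots,|S_\Omega|$, contribute, and I may take $\xi_F = \sum_j \alpha_j\, \partial_{v_j}$ without loss of generality.

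The next step is the explicit interior-product computation. Using the antisymmetric extension of $\omega_k^{ij}$ to $i,j\in I_k$ (setting $\omega_k^{ji}=-\omega_k^{ij}$ and $\omega_k^{ii}=0$), I have
\begin{equation*}
\partial_{v_j} \lrcorner \Omega = \sum_{k=1}^n \sum_{b\in I_k} \omega_k^{jb}\, \delta v_b \wedge dx^k,
\end{equation*}
where the summand is understood to vanish when $j\notin I_k$. Therefore
\begin{equation*}
\xi_F \lrcorner \Omega = \sum_{k=1}^n \sum_{b\in I_k} \Bigl(\,\sum_{j\in I_k} \alpha_j\, \omega_k^{jb}\Bigr) \delta v_b \wedge dx^k,
\end{equation*}
which only involves basis elements $\delta v_b \wedge dx^k$ with $b\in I_k$.

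On the other side, expand
\begin{equation*}
\delta F = \sum_{k=1}^n \delta F_k \wedge dx^k = \sum_{k=1}^n \sum_{\ell,(i)} \frac{\partial F_k}{\partial u_\ell^{(i)}}\, \delta u_\ell^{(i)} \wedge dx^k.
\end{equation*}
Comparing coefficients of $\delta u_\ell^{(i)}\wedge dx^k$ in the identity $\xi_F \lrcorner \Omega = \delta F$ and using the linear independence of the basis $\{\delta u_\ell^{(i)}\wedge dx^k\}$, one sees that $\partial F_k/\partial u_\ell^{(i)}$ must vanish whenever $\delta u_\ell^{(i)}$ is not of the form $\delta v_b$ with $b\in I_k$. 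This is precisely the statement that $F_k$ can only depend on the $v_j$ with $j\in I_k$.

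I do not foresee a genuine obstacle; the proof is essentially a bookkeeping argument. The only subtlety to flag is the two-step justification that components of $\xi_F$ along directions outside $S_\Omega$ are irrelevant: they are annihilated by $\Omega$, so they cannot be used to generate $\delta$-derivatives of $F_k$ with respect to any variable outside $S_\Omega$, let alone outside the smaller set $\{v_j : j\in I_k\}$ forced by the $dx^k$-coefficient.
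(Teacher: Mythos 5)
Your proof is correct: the degree count forcing $\xi_F$ to be a vertical $(1,0)$-vector field, the observation that only components along the $v_j\in S_\Omega$ survive contraction with $\Omega$, and the comparison of coefficients of the linearly independent basis elements $\delta u_\ell^{(i)}\wedge dx^k$ together give exactly the claimed restriction on the dependence of $F_k$. The paper itself omits the proof (referring to it as an easy extension of Proposition 2 of \cite{Caudrelier_Stoppato_2020}), and your bookkeeping argument is precisely the intended one, so it supplies the missing details rather than taking a different route.
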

We can now define the multi-time Poisson brackets for Hamiltonian forms, in analogy with the covariant Poisson bracket. 
\begin{dfn}[multi-time Poisson brackets]
For two Hamiltonian forms $P$ and $Q$, of degree respectively $r$ and $s$, we define their multi-time Poisson bracket as
\begin{equation}
\cpb{P}{Q} = (-1)^{r} \xi_P \lrcorner \delta Q.
\end{equation}
\end{dfn}
This definition is formally the same as the one given by Kanatchikov, cf \eqref{def_cov_PB}, but we stress that since the degree of the symplectic multiform is $(2,1)$ (for every $n$) is different from the degree of the multisymplectic form, which is $(2,n-1)$ in general, then the resulting degree of the Poisson bracket of two horizontal forms will be different. In particular, we see that the multi-time Poisson bracket of two horizontal 1-forms is still a horizontal 1-form. The two brackets coincide when $n=2$. These Poisson brackets are graded antisymmetric and bilinear in the space of Hamiltonian forms. In particular
\begin{itemize}
\item $P,Q \in \mathcal{A}^{(0,1)}$, then $\cpb{P}{Q} = - \ip{\xi_P}{\delta Q} =  - \cpb{Q}{P} = \ip{\xi_Q}{\delta P}$;
\item $P \in \mathcal{A}^{(0,1)}$ and $H \in \mathcal{A}$, then $\cpb{H}{P} = \ip{\xi_H}{\delta P} = -\cpb{P}{H} = \ip{ \xi_P}{\delta H}$.
\end{itemize}
As mentioned before for the covariant Poisson bracket, our definition may lead to issues regarding the Jacobi identity for instance. However, in the spirit of \cite{Caudrelier_Stoppato_2020}, we investigate this further in \cite{Caudrelier_Stoppato_2020_3} in connection with the $r$-matrix structure of the multi-time Poisson bracket whereby the Jacobi identity translates into the classical Yang-Baxter equation. 
\begin{thrm}\label{df:hamiltonequations}
On the equations of motion
\begin{equation}
dF = \ip{\xi_F}{\delta \Ham}
\end{equation}
for any Hamiltonian $1$-form that does not depend on the independent variables.
\end{thrm}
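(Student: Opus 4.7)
The plan is to start from the multiform Hamilton equations, which on the multiform Euler--Lagrange equations assert that $\delta \Ham = \sum_{j=1}^n dx^j \wedge \ip{\tilpartial_j}{\Omega}$, and to contract with the Hamiltonian vector field $\xi_F$ associated to $F$. A quick degree count (as in the proposition preceding the definition of the multi-time Poisson brackets) shows that for $F\in\mathcal{A}^{(0,1)}$, $\xi_F$ must be a single purely vertical vector field, i.e.\ $\ip{\xi_F}{dx^j}=0$ for all $j$. That single fact drives the computation.

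First, I would apply $\ip{\xi_F}{\cdot}$ to both sides of the multiform Hamilton equations. The Leibniz rule for the interior product together with $\ip{\xi_F}{dx^j}=0$ gives $\ip{\xi_F}{(dx^j \wedge \ip{\tilpartial_j}{\Omega})} = - dx^j \wedge \ip{\xi_F}{\ip{\tilpartial_j}{\Omega}}$. Next, I would use the graded anticommutation of the interior product with two ordinary vector fields, namely $\ip{\xi_F}{\ip{\tilpartial_j}{\Omega}} = - \ip{\tilpartial_j}{\ip{\xi_F}{\Omega}}$, and invoke the Hamiltonian property $\ip{\xi_F}{\Omega}=\delta F$. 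The two minus signs combine to yield
\begin{equation}
\ip{\xi_F}{\delta \Ham} \;=\; \sum_{j=1}^n dx^j \wedge \ip{\tilpartial_j}{\delta F}.
\end{equation}

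To finish, I would reduce $\ip{\tilpartial_j}{\delta F}$ to $\tilpartial_j F$ using the identity $\tilpartial_j = \delta \ip{\tilpartial_j} + \ip{\tilpartial_j}{\delta}$ recalled at the end of Section~\ref{varbi}. The boundary term vanishes: since $F=\sum_k F_k\, dx^k$ is horizontal and $\tilpartial_j$ is vertical, $\ip{\tilpartial_j}{F}=0$, hence $\ip{\tilpartial_j}{\delta F}=\tilpartial_j F$. Finally, because $F$ has no explicit dependence on the independent variables, $\tilpartial_j F_k = \partial_j F_k$ componentwise, so $\sum_j dx^j \wedge \tilpartial_j F = \sum_{j,k} (\partial_j F_k)\, dx^j\wedge dx^k = dF$, which is the desired identity.

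The main care point is purely bookkeeping: tracking the two sign conventions (Leibniz for $\ip{\xi_F}{(dx^j\wedge\cdot)}$ and anticommutativity of successive interior products of vertical vector fields), and verifying that the boundary term $\delta\ip{\tilpartial_j}{F}$ truly vanishes for a horizontal $F$. Once those are settled the result is immediate; there is no conceptual obstacle beyond faithfully invoking the multiform Hamilton equations, the Hamiltonian property of $F$, and the operator identity for $\tilpartial_j$.
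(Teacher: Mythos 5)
Your argument is correct and follows essentially the same route as the paper's own proof: contract the multiform Hamilton equations $\delta \Ham = \sum_j dx^j \wedge \ip{\tilpartial_j}{\Omega}$ with $\xi_F$, exchange the two interior products (picking up the two cancelling signs), substitute $\ip{\xi_F}{\Omega}=\delta F$, and reduce $\ip{\tilpartial_j}{\delta F}$ to $\tilpartial_j F = \partial_j F$ via the identity $\tilpartial_j = \delta\ip{\tilpartial_j}+\ip{\tilpartial_j}{\delta}$ and the horizontality of $F$. Your sign bookkeeping and the vanishing of $\delta\ip{\tilpartial_j}{F}$ match the paper's reasoning exactly.
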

\begin{proof}
Using \eqref{Hamilton:equations} and the antisymmetry of $\Omega$ we have
\begin{equation}
 \ip{\xi_F}{\delta \Ham} = \ip{\xi_F} \sum_{j=1}^n dx^j \wedge \ip{\tilpartial_j}\Omega = - \sum_{j=1}^n dx^j \wedge \ip{\xi_F}\ip{\tilpartial_j} \Omega = \sum_{j=1}^n dx^j \wedge \ip{\tilpartial_j}\ip{\xi_F} \Omega.
\end{equation}
Since $\ip{\xi_F}\Omega = \delta F$ we obtain
\begin{equation}
 \ip{\xi_F}{\delta \Ham} =\sum_{j=1}^n dx^j \wedge \ip{\tilpartial_j} \delta F.
\end{equation}
Using the property $\ip{\tilpartial_j}\delta = \tilpartial_j - \delta \ip{\tilpartial_j}$
\begin{equation}
 \ip{\xi_F}{\delta \Ham} =\sum_{j=1}^n dx^j \wedge \tilpartial_j F -\sum_{j=1}^n dx^j \wedge  \delta \ip{\tilpartial_j} F.
\end{equation}
Since $F$ is purely horizontal $\ip{\tilpartial_j} F=0$, and since it does not depend explicitly on the space-time variables $\tilpartial_jF = \partial_jF$, so that
\begin{equation}
 \ip{\xi_F}{\delta \Ham} =\sum_{j=1}^n dx^j \wedge \partial_j F = dF. \qedhere
\end{equation}
\end{proof}
If the components $H_{ij}$ of $\Ham$ are Hamiltonian $0$-forms, then the previous proposition leads to:
\begin{crll}
On the equations of motion
\begin{equation}
dF =  \sum_{i<j=1}^n \cpb{H_{ij}}{F} dx^{ij}.
\end{equation}
for any Hamiltonian $1$-form that does not depend on the independent variables.
\end{crll}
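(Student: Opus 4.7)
The plan is to deduce the corollary from Theorem \ref{df:hamiltonequations} by simply unpacking the right-hand side $\ip{\xi_F}{\delta\Ham}$, using the component decomposition $\Ham=\sum_{i<j}H_{ij}\,dx^{ij}$ together with the bidegree information about the Hamiltonian vector field $\xi_F$.

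First I would write $\delta\Ham=\sum_{i<j}\delta H_{ij}\wedge dx^{ij}$, which is valid because each $H_{ij}\in\mathcal{A}$ is a 0-form (so $\delta$ acts on $H_{ij}$ only) and the horizontal factor $dx^{ij}$ is $\delta$-closed. Next I would recall from the degree-counting argument preceding the definition of the multi-time Poisson bracket that, since $F\in\mathcal{A}^{(0,1)}$, the Hamiltonian vector field $\xi_F$ is of bidegree $(1,0)$, i.e.\ purely vertical. Consequently $\ip{\xi_F}{dx^{ij}}=0$, and the graded Leibniz rule for the interior product gives
\begin{equation}
\ip{\xi_F}{(\delta H_{ij}\wedge dx^{ij})}=\ip{\xi_F}{\delta H_{ij}}\,dx^{ij}.
\end{equation}
Plugging this into Theorem \ref{df:hamiltonequations} yields, on the equations of motion,
\begin{equation}
dF=\sum_{i<j=1}^{n}\ip{\xi_F}{\delta H_{ij}}\,dx^{ij}.
\end{equation}

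Finally I would identify each scalar factor with a multi-time Poisson bracket. Because $H_{ij}$ is assumed to be a Hamiltonian 0-form and $F$ a Hamiltonian 1-form, the second bullet listed just after the definition of $\cpb{\cdot}{\cdot}$ applies and gives $\cpb{H_{ij}}{F}=\ip{\xi_F}{\delta H_{ij}}$. Substituting this identification into the previous display produces exactly
\begin{equation}
dF=\sum_{i<j=1}^{n}\cpb{H_{ij}}{F}\,dx^{ij},
\end{equation}
as claimed.

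I do not expect any serious obstacle: the result is essentially a bookkeeping exercise combining Theorem \ref{df:hamiltonequations} with the definition of the multi-time Poisson bracket. The only point that needs care is checking the bidegrees so that the interior product $\ip{\xi_F}{\delta H_{ij}}$ makes sense and produces a scalar in $\mathcal{A}$; this is guaranteed by the Hamiltonian property of $F$, which forces $\xi_F$ to be purely vertical of type $(1,0)$, and by the assumption that the $H_{ij}$ are themselves Hamiltonian, which is exactly the hypothesis under which the brackets $\cpb{H_{ij}}{F}$ are defined.
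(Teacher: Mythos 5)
Your proposal is correct and follows essentially the same route as the paper: both start from Theorem \ref{df:hamiltonequations}, expand $\delta\Ham=\sum_{i<j}\delta H_{ij}\wedge dx^{ij}$, pull the vertical vector field $\xi_F$ through the horizontal factors, and identify $\ip{\xi_F}{\delta H_{ij}}$ with $\cpb{H_{ij}}{F}$ via the sign conventions listed after the definition of the multi-time Poisson bracket. The paper states this as a one-line chain of equalities, whereas you additionally justify the bidegree $(1,0)$ of $\xi_F$ and the Leibniz step explicitly; this is a harmless elaboration, not a different argument.
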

\begin{proof}
$$dF =\ip{\xi_F}{\delta \Ham} =\sum_{i<j=1}^n \ip{\xi_F}\delta H_{ij} \wedge dx^{ij}= - \sum_{i<j=1}^n \cpb{F}{H_{ij}} dx^{ij} =  \sum_{i<j=1}^n \cpb{H_{ij}}{F} dx^{ij}\,.$$
\end{proof}
This is a generalisation of the usual Hamilton equations in Poisson Bracket form for classical finite-dimensional mechanics $\dot f = \{H,f\}$. In our context, this result turns out to be useful in relation to conservation laws within an integrable hierarchy. Indeed, if $F$ is a $1$-form, we have
\begin{equation}
dF = \sum_{j=1}^n dx^j \wedge \partial_j F = \sum_{i,j=1}^n \partial_i F_j dx^i \wedge dx^j =  \sum_{i<j=1}^n (\partial_i F_j - \partial_j F_i) dx^i \wedge dx^j
\end{equation}
which means that, in fact if $dF=0$ on the equations of motion, then 
\begin{equation}
\partial_i F_j = \partial_j F_i, \qquad \forall i \ne j.
\end{equation}
This suggests the following
\begin{dfn}\label{cons_law}
We say that a Hamiltonian 1-form $F$ is a \emph{conservation law} if $dF=0$ on the equations of motion.
\end{dfn}
It is then immediate from Proposition \ref{df:hamiltonequations} that
\begin{crll}
A Hamiltonian 1-form $F$ is a conservation law if and only if on the equations of motion
\begin{equation}
\ip{\xi_F}{\delta \Ham}=0\,.
\end{equation}
\end{crll}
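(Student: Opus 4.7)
The proof will be essentially a one-line consequence of the preceding Theorem/Proposition \ref{df:hamiltonequations}, so my plan is short. The key observation is that Proposition \ref{df:hamiltonequations} establishes, for any Hamiltonian $1$-form $F$ independent of the independent variables, the identity
\begin{equation*}
dF = \ip{\xi_F}{\delta \Ham} \qquad \text{on the equations of motion.}
\end{equation*}
This is precisely the bridge needed between the two conditions appearing in the corollary.

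My first step is to invoke Definition \ref{cons_law}, which declares $F$ to be a conservation law exactly when $dF = 0$ on the equations of motion. Next, I substitute using the identity above: on the equations of motion the left-hand side $dF$ equals $\ip{\xi_F}{\delta \Ham}$, so the vanishing of one is tautologically equivalent to the vanishing of the other. Both directions of the ``if and only if'' follow from this single substitution, with no further computation required.

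The only thing worth flagging is a hypothesis check: Proposition \ref{df:hamiltonequations} requires $F$ to be a Hamiltonian $1$-form that does not depend explicitly on the independent variables, and these are exactly the standing assumptions of the corollary (the implicit convention throughout this section being that the objects under consideration have no explicit $x^i$-dependence, consistent with the working assumption on the Lagrangian coefficients stated in the introduction). There is therefore no real obstacle; the corollary is a direct restatement of Proposition \ref{df:hamiltonequations} read through the definition of conservation law.
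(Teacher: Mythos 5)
Your proof is correct and matches the paper's reasoning exactly: the corollary is stated in the paper as an immediate consequence of Theorem \ref{df:hamiltonequations} combined with Definition \ref{cons_law}, which is precisely the substitution you perform. Your hypothesis check on the independence from the $x^i$ variables is a sensible addition but does not change the argument.
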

This is clearly an extension of the concept of first integral in classical mechanics. As we will show on some examples below, the very definition of a Hamiltonian form being a conservation law can lead to its explicit form. This is a rather elegant byproduct of our approach.\\

We now address the relationship between the multi-time Poisson bracket that we just defined and the single-time Poisson brackets that can be derived from the single Lagrangians $L_{ij}$ using the usual construction.
Starting from the decomposition \eqref{omega:decomp}, for each $i=1,\dots,n$, it is natural to want to define the $i$-th Poisson bracket of two $0$-forms $f,g\in{\cal A}$ as
\begin{equation}
\{f,g\}_i := - \ip{\xi^i_f}{\delta g},\qquad \mbox{where }\qquad \ip{\xi^i_f} {\omega_i} = \delta f.  
\end{equation} 
We remark that there is no sum on the $i$ index. Compared to the standard finite-dimensional case, let us note that this definition requires some care as in general, we cannot guarantee that each $\omega_i$ is non degenerate (see e.g. the KdV example). Therefore, in the above definition we need to do two things. Viewing $\omega_i$ as a linear map from vertical vector fields to vertical $1$-forms, we restrict our attention to $0$-forms $f$ such that $\delta f$ is in the image of $\omega_i$. In other words, we consider $f$ such that there exists a (vertical) vector field ${\xi^i_f}$ which satisfies $\delta f=\ip{\xi^i_f} {\omega_i}$. In that case, we say that {\it $f$ is Hamiltonian with respect to $\omega_i$}. We also remedy the possible non trivial kernel by working modulo it, hence obtaining a non degenerate map, which we keep denoting $\omega_i$, on equivalence classes of vertical vector fields. This has no effect on the above definition of $\{f,g\}_i$ where $f$ and $g$ are two Hamiltonian $0$-forms with respect to $\omega_i$. We work with this understanding in the rest of the paper. 
\begin{thrm}[Decomposition of the multi-time Poisson Bracket]\label{decomposition}
Let $\displaystyle F= \sum_{i=1}^n F_i dx^i$ be a Hamiltonian $1$-form, then for $i=1,\dots,n$, $F_i$ is Hamiltonian with respect to $\omega_i$. Let $\displaystyle G= \sum_{i=1}^n G_i dx^i$ be another Hamiltonian $1$-form, then the following decomposition of the multi-time Poisson bracket holds:
\begin{equation}
\cpb{F}{G} = \sum_{i=1}^n \{F_i,G_i\}_i dx^i.
\end{equation}
\end{thrm}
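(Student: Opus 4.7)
The plan is to first analyze the bi-degree of the Hamiltonian vector field $\xi_F$ and then to exploit the decomposition $\Omega=\sum_j \omega_j\wedge dx^j$ to split the defining relation $\ip{\xi_F}{\Omega}=\delta F$ component by component. Because $\Omega\in\mathcal{A}^{(2,1)}$ and $\delta F\in\mathcal{A}^{(1,1)}$, a bi-degree count forces (without loss of generality) $\xi_F$ to be a purely vertical $(1,0)$ vector field; any other homogeneous component of $\xi_F$ would produce a contribution of wrong bi-degree which must therefore vanish and can be discarded.

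With $\xi_F$ vertical, the identity $\ip{\xi_F}(\omega_j\wedge dx^j)=\ip{\xi_F}{\omega_j}\wedge dx^j$ turns $\ip{\xi_F}{\Omega}=\delta F$ into
\begin{equation*}
\sum_{j=1}^n \ip{\xi_F}{\omega_j}\wedge dx^j=\sum_{j=1}^n \delta F_j\wedge dx^j.
\end{equation*}
Linear independence of the $dx^j$ over $\mathcal{A}^{(1,0)}$ yields $\ip{\xi_F}{\omega_j}=\delta F_j$ for every $j$, which proves that each $F_j$ is Hamiltonian with respect to $\omega_j$ (one may take $\xi_F$ itself as a representative of $\xi^j_{F_j}$, unique only modulo $\ker\omega_j$). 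Applying the same argument to $G$ provides vector fields $\xi^j_{G_j}$ with $\ip{\xi^j_{G_j}}{\omega_j}=\delta G_j$.

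Next, I would compute the bracket directly. Since $\ip{\xi_F}{\delta G_j}$ is a $0$-form and $\xi_F$ is vertical,
\begin{equation*}
\cpb{F}{G}=-\ip{\xi_F}{\delta G}=-\sum_{j=1}^n \ip{\xi_F}{\delta G_j}\,dx^j.
\end{equation*}
The conclusion will follow from the identity $\ip{\xi_F}{\delta G_j}=\ip{\xi^j_{F_j}}{\delta G_j}$, or equivalently $\ip{\xi_F-\xi^j_{F_j}}{\delta G_j}=0$ for every $j$.

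This last identity is the one genuinely substantive step — and it is also the step where one must be careful about the potential degeneracy of $\omega_j$, which is the main conceptual obstacle. Using $\delta G_j=\ip{\xi^j_{G_j}}{\omega_j}$ and the anticommutativity of interior products of vertical vector fields,
\begin{equation*}
\ip{\xi_F-\xi^j_{F_j}}{\delta G_j}=\ip{\xi_F-\xi^j_{F_j}}\ip{\xi^j_{G_j}}{\omega_j}=-\ip{\xi^j_{G_j}}\ip{\xi_F-\xi^j_{F_j}}{\omega_j}=0,
\end{equation*}
the final equality because $\ip{\xi_F-\xi^j_{F_j}}{\omega_j}=\delta F_j-\delta F_j=0$, i.e.\ $\xi_F-\xi^j_{F_j}\in\ker\omega_j$. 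Substituting back gives $\cpb{F}{G}=-\sum_j \ip{\xi^j_{F_j}}{\delta G_j}\,dx^j=\sum_j \{F_j,G_j\}_j\,dx^j$, which is the claimed decomposition. The computation also shows that the single-time brackets on the right-hand side are insensitive to the choice of representatives modulo $\ker\omega_j$, consistently with the working definition adopted in the text.
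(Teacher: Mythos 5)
Your proof is correct and follows essentially the same route as the paper's: decompose $\ip{\xi_F}{\Omega}=\delta F$ along the $dx^j$ to obtain $\ip{\xi_F}{\omega_j}=\delta F_j$, then distribute the interior product over $\delta G=\sum_j\ip{\xi^j_{G_j}}{\omega_j}\wedge dx^j$ and swap the two contractions. Your only additions --- the bi-degree count forcing $\xi_F$ to be a vertical $(1,0)$ field, and the explicit check that the answer is independent of the representative of $\xi^j_{F_j}$ modulo $\ker\omega_j$ --- are sound and merely make explicit what the paper leaves implicit.
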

\begin{proof}
On the one hand, by definition 
$$\delta F=\sum_{i=1}^n \delta F_i \wedge dx^i\,,$$
and on the other hand, since $F$ is Hamiltonian
\begin{equation}
\delta F=\ip{\xi_F}{\sum_{i=1}^n \omega_i \wedge dx^i}=\sum_{i=1}^n \ip{\xi_F}{\omega_i} \wedge dx^i\,, 
\end{equation}
hence $\delta F_i=\ip{\xi_F}{\omega_i}$ so $F_i$ is Hamiltonian with respect to $\omega_i$ for each $i=1,\dots,n$ and we can take $\xi^i_{F_i}=\xi_F$ for all $i=1,\dots,n$. Note that this gives an idea of how restrictive it is for $F$ to be Hamiltonian. 
Next, consider the following chain of equalities
\begin{eqnarray*}
\cpb{F}{G} &=& - \ip{\xi_F}{ \delta G}=- \ip{\xi_F}(\sum_{i=1}^n \delta G_i \wedge dx^i) = -\ip{\xi_F}(\sum_{i=1}^n \ip{\xi^i_{G_i}}\omega_i \wedge dx^i)\\
&=& \sum_{i=1}^n \ip{\xi^i_{G_i}} \ip{\xi_F} \omega_i \wedge dx^i=\sum_{i=1}^n \ip{\xi^i_{G_i}} \delta F_i \wedge dx^i=  \sum_{i=1}^n \{F_i,G_i\}_i dx^i
\end{eqnarray*}
which concludes the proof.  
\end{proof}
This is the generalization to an arbitrary number $n$ of flows in an integrable hierarchy of the decomposition theorem that was obtained in \cite{Caudrelier_Stoppato_2020} on examples. Theorem \ref{decomposition} provides a general proof, independent of examples, and reproduces the result of \cite{Caudrelier_Stoppato_2020} in the particular case $n=2$. This theorem describes the relationship between our multi-time Poisson bracket $\cpb{\;}{\;}$, encapsulating an arbitrary number of flows in the hierarchy, and the usual and dual single-time Poisson brackets $\pb{\;}{\;}_i$, which are related to each flow separately.

\section{Example: (potential) KdV hierarchy}\label{Section:pKdV}
In the following we will see the example of the KdV hierarchy with respect to its first two times, so in usual hierarchy notations, we would have $x_1=x$, $x_2=t_2$ and $x_3=t_3$ (if one consider the KdV alone, $t_3$ is simply the time $t$). In fact, we consider the potential form of the KdV hierarchy which is the appropriate form for a Lagrangian formulation. It is known that for KdV hierarchy the even flows are trivial $v_{2k} = 0$ $\forall k$, so we will also treat the less trivial case of the first two odd times $x_1=x$, $x_3=t_3$ and $x_5=t_5$. We use the Lagrangians multiforms presented in \cite{Vermeeren_2018}.
\subsection{Times 1,2 and 3}
\subsubsection{Multiform Euler-Lagrange equations}
We write the Hamiltonian formulation of the first two levels of the (potential) KdV hierarchy, described by the Lagrangian multiform $\Lag = L_{12}\, dx^{12} + L_{23}\, dx^{23} + L_{13}\, dx^{13}$, where
\begin{subequations}\label{kdv123:eq}
\begin{gather}
L_{12} = v_1 v_2,\\
L_{23} = - 3 v_1^2v_2 - v_1 v_{112} + v_{11} v_{12} - v_{111} v_2,\\
L_{13} = - 2 v_1^3 - v_1 v_{111} + v_1 v_3.
\end{gather}
\end{subequations}
One can easily check that the multiform E-L equations $\delta d \Lag = 0 $ are equivalent to
\begin{equation}
v_2 = 0, \qquad v_3 = v_{111} + 3v_1^2.
\end{equation}
and differential consequences: in particular we have the potential KdV from $v_{13} = (v_3)_1= v_{1111} + 6 v_1 v_{11}$.
\subsubsection{The symplectic multiform}
We are now going to show the procedure to obtain the symplectic multiform from $\Lag$ and \eqref{kdv123:eq}. We start by computing the $\delta$-differential of the Lagrangian multiform:
\begin{eqsplit}
\delta \Lag =&  v_1 \delta v_2 \wedge dx^{12} + v_2 \delta v_1 \wedge dx^{12} \\
&+(- 6 v_1 v_2 - v_{112}) \delta v_1 \wedge dx^{23} + (-3 v_1^2 - v_{111}) \delta v_2 \wedge dx^{23} + v_{12} \delta v_{11} \wedge dx^{23} \\
&+ v_{11} \delta v_{12} \wedge dx^{23} - v_1 \delta v_{112} \wedge dx^{23} - v_2 \delta v_{111} \wedge dx^{23}\\
&+(v_3 - v_{111} - 6 v_1^2) \delta v_1 \wedge dx^{13} + v_1 \delta v_3 \wedge dx^{13} - v_1 \delta v_{111} \wedge dx^{13}. 
\end{eqsplit} 
We now use the property $d\delta = - \delta d$ on some of the terms to obtain the desired expression $\delta \Lag = {\cal E}(\Lag) - d\omegone$, where ${\cal E}(\Lag)=0$ is equivalent to \eqref{kdv123:eq}. The reader can verify the following identities
\begin{equation}
 v_1 \delta v_2 \wedge dx^{12} = -v_{12}  \delta v \wedge dx^{12} - v_{13}  \delta v \wedge dx^{13}  - v_1 \delta v_3 \wedge dx^{13} - d (- v_1 \delta v \wedge dx^1),
\end{equation}
\begin{equation}
v_2 \delta v_1 \wedge dx^{12} = - v_{12} \delta v \wedge dx^{12} + v_{23} \delta v \wedge dx^{23} + v_2 \delta v_3 \wedge dx^{23} - d( v_2 \delta v \wedge dx^2),
\end{equation}
\begin{eqsplit}
(v_3 - v_{111} - 6 v_1^2) \delta v_1 \wedge dx^{13} =& - (v_3 - v_{111} - 6 v_1^2)_1 \delta v \wedge dx^{13} -(v_3 - v_{111} - 6 v_1^2)_2 \delta v \wedge dx^{23}\\ 
&- (v_3 - v_{111} - 6 v_1^2) \delta v_2 \wedge dx^{23} - d((v_3 - v_{111} - 6 v_1^2)\delta v \wedge dx^3),
\end{eqsplit}
\begin{eqsplit}
- v_1 \delta v_{111} \wedge dx^{13} =  &v_{1111} \delta v \wedge dx^{13} + v_{1112} \delta v \wedge dx^{23} + v_{111} \delta v_2 \wedge dx^{23}\\ 
&- v_{112} \delta v_1 \wedge dx^{23} - v_{11} \delta v_{12} \wedge dx^{23} + v_{12} \delta v_{11} \wedge dx^{23} + v_1 \delta v_{112}  \wedge dx^{23}\\ 
&- d( - v_1 \delta v_{11} \wedge dx^3 + v_{11} \delta v_1 \wedge dx^3 - v_{111} \delta v \wedge dx^3).
\end{eqsplit}
Using these identities in $\delta \Lag$ we get 
\begin{eqsplit}
\delta \Lag =& -2 v_{12} \delta v \wedge dx^{12} + (-2v_{13} + 2 v_{1111} + 12 v_1 v_{11}) \delta v \wedge dx^{13}\\
&+ (2v_{1112} + 12 v_1 v_{12}) \delta v \wedge dx^{23} +(- 6v_1 v_2 -2v_{112}) \delta v_1 \wedge dx^{23}\\
& +(-v_3 + v_{111} + 3 v_1^2) \delta v_2 \wedge dx^{23}+ v_2 \delta v_3 dx^{23} + 2v_{12} \delta v_{11} \wedge dx^{23} \\
&- v_2 \delta v_{111} \wedge dx^{23}\\
&- d\big( - v_1 \delta v \wedge dx^1 + v_2 \delta v \wedge dx^2 + (v_3 - 2 v_{111} - 6 v_1^2)\delta v \wedge dx^3\\
&+ v_{11} \delta v_1 \wedge dx^3 - v_1 \delta v_{11} \wedge dx^3 \big)\\
&\equiv {\cal E}(\Lag) - d\omegone
\end{eqsplit}
if we define $\omegone= - v_1 \delta v \wedge dx^1 + v_2 \delta v \wedge dx^2 + (v_3 - 2 v_{111} - 6 v_1^2)\delta v \wedge dx^3 + v_{11} \delta v_1 \wedge dx^3 - v_1 \delta v_{11} \wedge dx^3 $. We see that ${\cal E}(\Lag)= \delta \Lag + d \omegone=0$ is equivalent to the equations \eqref{kdv123:eq} and differential consequences. The symplectic multiform is then
\begin{eqsplit}
\Omega =& - \delta v_1 \wedge \delta v \wedge dx^1 + \delta v_2 \wedge \delta v \wedge dx^2 + \delta v_3 \wedge \delta v \wedge dx^3 \\
&- 2 \delta v_{111} \wedge \delta v \wedge dx^3 - 12 v_1 \delta v_1 \wedge \delta v \wedge dx^3 + 2 \delta v_{11} \wedge \delta v_1 \wedge dx^3.
\end{eqsplit}
\subsubsection{The Hamiltonian multiform}
We can now compute the Hamiltonian mutliform $\Ham = \displaystyle \sum_{i\le j} H_{ij}\, dx^{ij}$, using $$H_{ij}= \ip{\tilpartial_i}\omega^{(1)}_j -  \ip{\tilpartial_j}\omega^{(1)}_i - L_{ij}$$
    to find
\begin{subequations}
\begin{gather}
H_{12} = v_1 v_2,\\
H_{23} = - 3 v_1^2 v_2 - v_{111} v_2\\
H_{13}= v_1 v_3 - 4v_1^3 + v_{11}^2 - 2 v_1 v_{111}.
\end{gather}
\end{subequations}
The multiform Hamiltonian equations are obtained as
\begin{itemize}
\item $\delta H_{12} = \ip{\tilpartial_2}{\omega_1} - \ip{\tilpartial_1}{\omega_2}$:
\begin{equation}
v_1 \delta v_2 + v_2 \delta v_1 = - v_{12} \delta v + v_2 \delta v_1 - v_{12} \delta v + v_1 \delta v_2 \quad \implies \quad v_{12} = 0.
\end{equation}
\item $\delta H_{23} = \ip{\tilpartial_3}{\omega_2} - \ip{\tilpartial_2}{\omega_3}$:
\begin{eqsplit}
-3 v_1^2 \delta v_2 - 6 v_1 v_2 \delta v_1 - v_{111} \delta v_2 - v_2 \delta v_{111}  =& v_{23} \delta v - v_3 \delta v_2 - v_{23} \delta v + v_2 \delta v_3\\
& + 2 v_{1112} \delta v - 2 v_2 \delta v_{111} + 12 v_1 v_{12} \delta v\\
&- 12 v_1v_2 \delta v_1 - 2 v_{112} \delta v_1 + 2 v_{12} \delta v_{11} 
\end{eqsplit}
which implies the following system of equations
\begin{subequations}
\begin{gather}
v_2 = 0,\\
v_{12} = 0, \\
v_3 - 3v_1^2 - v_{111}=0,\\
v_{112} + 3 v_1 v_2 = 0,\\
v_{1112} + 6 v_1 v_{12} =0.
\end{gather}
\end{subequations}
\item $\delta H_{13} = \ip{\tilpartial_3}{\omega_1} - \ip{\tilpartial_1}{\omega_3}$:
\begin{eqsplit}
&v_1 \delta v_3 + v_3 \delta v_1 - 12 v_1^2 \delta v_1 + 2 v_{11} \delta v_{11} - 2 v_1 \delta v_{111} - 2 v_{111} \delta v_1 =\\
&- v_{13} \delta v + v_3 \delta v_1 - v_{13}\delta v + v_1 \delta v_3 + 2 v_{1111} \delta v\\
&- 2 v_1 \delta v_{111} + 12 v_1 v_{11} \delta v - 12 v_1^2 \delta v_1 - 2 v_{111} \delta v_1 + 2 v_{11} \delta v_{11},
\end{eqsplit}
which implies $v_{13} - v_{1111} - 6v_1 v_{11} = 0$.
\end{itemize}
This system of equations is equivalent to \eqref{kdv123:eq} as expected.
\subsubsection{Hamiltonian forms}
We now describe Hamiltonian forms for this case. A 1-form $Q=Q_1(v,v_1)\,dx^1 + Q_2(v,v_2)\, dx^2 + Q_3(v,v_1,v_3,v_{11},v_{111})\,dx^3$ for the symplectic multiform $\Omega$ is  Hamiltonian if and only if
\begin{equation}
\parder{Q_1}{v_1} = - \parder{Q_2}{v_2} = - \parder{Q_3}{v_3} = \frac{1}{2} \parder{Q_3}{v_{111}}, \qquad \parder{Q_1}{v} = \frac{1}{2} \parder{Q_3}{v_{11}}.
\end{equation}
Its related Hamiltonian vector field is
\begin{equation}
\xi_Q=\parder{Q_1}{v_1} \partial_v - \parder{Q_1}{v} \partial_{v_1} + \parder{Q_2}{v} \partial_{v_2} + \left(\parder{Q_3}{v} - 6 v_1 \parder{Q_3}{v_{11}}\right) \partial_{v_3} + \left(\frac{1}{2} \parder{Q_3}{v_1} - 3 v_1 \parder{Q_3}{v_{111}}\right) \partial_{v_{11}}.
\end{equation}
This can be proved as followed: one takes a generic vector field 
\begin{equation}
\xi_Q = A \partial_v + B \partial_{v_1} + C \partial_{v_2} + D \partial_{v_3} + E \partial_{v_{11}} + D \partial_{v_{111}}
\end{equation}
and determines the coefficients comparing the right and left hand-side of 
\begin{equation}
\ip{\xi_Q}{\Omega} = \delta Q. 
\end{equation}
This translates into constraints on the derivatives of $Q_i$ with respect to the field and its derivatives, and determines the coefficients of the vector field.\\
Here we verify that for any Hamiltonian 1-form $Q$ and modulo the equations of motion
\begin{equation}
d Q = \ip{\xi_Q}{\delta \Ham},
\end{equation}
or, more explicitly
\begin{itemize}
\item $\partial_1 Q_2 - \partial_2 Q_1=\ip{ \xi_Q}{\delta H_{12}}$, which means
\begin{eqsplit}
\parder{Q_2}{v} v_1 + \parder{Q_2}{v_2}v_{12} - \parder{Q_1}{v} v_2 - \parder{Q_1}{v_1}v_{12} =& - \parder{Q_1}{v} \parder{H_{12}}{v_1} + \parder{Q_2}{v} \parder{H_{12}}{v_2}\\
=&- \parder{Q_1}{v}v_2 + \parder{Q_2}{v} v_1 \\
&\implies \quad - 2  v_{12} \parder{Q_1}{v_1} = 0. 
\end{eqsplit} 
\item $\partial_2 Q_3 - \partial_3 Q_2=\ip{ \xi_Q}{\delta H_{23}}$, which means
\begin{eqsplit}
&\parder{Q_3}{v}v_2 + \parder{Q_3}{v_1}v_{12} + \parder{Q_3}{v_3} v_{23} + \parder{Q_3}{v_{11}}v_{112} + \parder{Q_3}{v_{111}} v_{1112} - \parder{Q_2}{v} v_3 - \parder{Q_2}{v_2}v_{23} \\
&= - \parder{Q_1}{v} \parder{H_{23}}{v_1} + \parder{Q_2}{v}\parder{H_{23}}{v_2}\\
&= 6 v_1 v_2 \parder{Q_1}{v} - (3 v_1^2 + v_{111}) \parder{Q_2}{v},
\end{eqsplit}
which again is
\begin{equation}
v_2 \parder{Q_3}{v} + v_{12} \parder{Q_3}{v_1} + 2v_{112} \parder{Q_1}{v_1} + (2v_{112} - 6 v_1 v_2) \parder{Q_1}{v} + (- v_3 + 3 v_1^2 + v_{111}) \parder{Q_2}{v} =0.
\end{equation}
\item  $\partial_1 Q_3 - \partial_3 Q_1=\ip{ \xi_Q}{\delta H_{13}}$, which means
\begin{eqsplit}
&\parder{Q_3}{v} v_1 + \parder{Q_3}{v_1} v_{11} + \parder{Q_3}{v_3} v_{13} + \parder{Q_3}{v_{11}} v_{111} + \parder{Q_3}{v_{111}} v_{1111} - \parder{Q_1}{v} v_3 - \parder{Q_1}{v_1} v_{13} \\
&=  - \parder{Q_1}{v} \parder{H_{13}}{v_1} + \left( \parder{Q_3}{v} - 6 v_1 \parder{Q_3}{v_{111}} \right) \parder{H_{13}}{v_3} + \left( \frac{1}{2} \parder{Q_3}{v_1} - 3 v_1 \parder{Q_3}{v_{111}} \right) \parder{H_{13}}{v_{11}}\\
&= (12 v_1^2 + 2 v_{111} - v_3) \parder{Q_1}{v} + v_1 \parder{Q_3}{v} - 6 v_1^2 \parder{Q_3}{v_{11}} + v_{11} \parder{Q_3}{v_1} - 6 v_1 v_{11} \parder{Q_3}{v_{111}},
\end{eqsplit}
which again is $(2v_{13} - 2v_{1111} - 12 v_{1} v_{11}) \parder{Q_3}{v_3}=0$. 
\end{itemize}
\subsubsection{Conservation Laws}
We can now find a conservation law for the Lagrangian multiform $\Lag$, i.e. a Hamiltonian 1-form $F = F_1(v,v_1) dx^1 + F_2(v, v_2) dx^2 + F_3(v,v_1,v_3,v_{11}, v_{111}) dx^3$ such that $\ip{\xi_F}{\delta \Ham} = \xi_F \Ham =0$:
\begin{itemize}
\item $\xi_F H_{12} = 0 $ means that $- \parder{F_1}{v}v_2 + \parder{F_2}{v} v_1=0$. Since $\parder{F_1}{v_1} = - \parder{F_2}{v_2}$, necessarily $F_1 = a(v) v_1 + b(v)$ and $F_2 = -a(v) v_2 + c(v)$. The condition above then translates to 
\begin{equation}
-a'(v) v_1v_2 -b'(v)v_2 -a'(v)v_1v_2 + c'(v)v_1=0 \quad \implies \quad a'(v) = b'(v) = c'(v) =0.
\end{equation}
We will set $a = 1$, and $b=c=0$, so we have $F_1 = v_1$ and $F_2 = -v_2$.
\item $\xi_F H_{23} = 6v_1 v_2 \parder{F_1}{v} - (3v_1^2 + v_{111}) \parder{F_2}{v} = 0$ automatically.
\item Because of the Hamiltonianity constraint we have that $F_3 = -v_3 +2 v_{111} + d(v,v_1)$. Now we solve for $d$ the equation $\xi_F H_{13} = (12 v_1^2 + 2 v_{111} - v_3) \parder{F_1}{v} + v_1 \parder{F_3}{v} - 6 v_1^2 \parder{F_3}{v_{11}} + v_{11} \parder{F_3}{v_1} - 6 v_1 v_{11} \parder{F_3}{v_{111}} = v_1 \parder{d(v,v_1)}{v} + v_{11} \parder{d(v,v_1)}{v_1} - 12 v_1 v_{11} =0$. This implies
\begin{equation}
\parder{d}{v} = 0, \quad \parder{d}{v_1} = 12 v_1 , \qquad \implies \qquad d= 6v_1^2.
\end{equation}
\end{itemize}
The conservation law is then
\begin{equation}
F = v_1 dx^1 - v_2 dx^2 + (-v_3 + 2v_{111} + 6v_1^2) dx^3.
\end{equation}
In fact its differential $dF$ is
\begin{eqsplit}
& v_{12} dx^{21} + v_{13} dx^{31} - v_{12} dx^{12} - v_{23} dx^{32} + (-v_{13} + 2v_{1111} + 12 v_1 v_{11}) dx^{13}\\
 &+ (-v_{23} + 2 v_{1112} + 12 v_1 v_{12}) dx^{23}=\\
&-2 v_{12} dx^{12} + (-2 v_{13} + 2v_{1111} + 12 v_1 v_{11}) dx^{13} + (2 v_{1112} + 12 v_1 v_{12}) dx^{23}
\end{eqsplit}
which vanishes on the equations of motion.
\subsubsection{Another symplectic multiform}
We now mention how to compute another symplectic multiform (and its related Hamiltonian multiform). One can perform an equivalent computation to the one above, making different choices as to what to apply $\delta d = - d \delta$ on, and obtain
\begin{equation}
\tilde \omegone =  - v_1 \delta v \wedge dx^1  + \frac{v_2}{2} \delta v \wedge dx^2 +\frac{1}{2} (v_3 - 9 v_1^2 - 3 v_{111}) \delta v\wedge dx^3 + v_{11} \delta v_1\wedge dx^3 - v_1 \delta v_{11}\wedge dx^3.
\end{equation}
It is easy to check that both $\delta \Lambda + d \tilde\omegone=0$ and $d(\omegone - \tilde \omegone)=0$ are  equivalent to \eqref{kdv123:eq}. We then define
\begin{eqsplit}
\tilde\Omega =& - \delta v_1 \wedge \delta v \wedge dx^1 + \frac{1}{2} \delta v_2 \wedge \delta v \wedge dx^2\\ 
&+ \frac{1}{2} \delta v_3 \wedge \delta v\wedge dx^3 - 9 v_1 \delta v_1 \wedge \delta v\wedge dx^3 - \frac{3}{2} \delta v_{111} \wedge \delta v\wedge dx^3 + 2 \delta v_{11} \wedge \delta v_1\wedge dx^3.
\end{eqsplit}
The coefficients of Hamiltonian multiform $\tilde{\Ham} = \tilde H_{12}\, dx^{12} + \tilde H_{23}\, dx^{23} + \tilde H_{13} \,dx^{13}$ are
\begin{subequations}
\begin{gather}
\tilde H_{12} = \frac{1}{2}v_1 v_2, \\
\tilde H_{23} = - \frac{3}{2} v_1^2 v_2 - \frac{1}{2}v_2 v_{111},\\
\tilde H_{13}= \frac{1}{2}v_1 v_3 + v_{11}^2 - \frac{5}{2} v_1^3 - \frac{5}{2}v_1 v_{111}
\end{gather} 
\end{subequations}
and the multiform Hamilton equations for $\tilde {\Ham}$ and $\tilde \Omega$ bring the same set of equations as expected.
\subsection{Times 1,3 and 5}
\subsubsection{The symplectic and Hamiltonian multiform}
In the previous section we considered the times 1 2 and 3 of (potential) KdV hierarchy. We can also describe the odd-time flows 1, 3 and 5, using the Lagrangian multiform $\Lag = L_{13}\, dx^{13} + L_{15}\, dx^{15} + L_{35}\, dx^{35}$, where
\begin{subequations}
\begin{gather}
L_{13} = -2 v_1^3 + v_1 v_3 - v_1 v_{111},\\
L_{15} = - 5 v_1^4 + 10 v_1 v_{11}^2 + v_1 v_5 - v_{111}^2,
\end{gather}
\begin{eqsplit}
L_{35} =& 6 v_1^5 - 10 v_1^3 v_3 + 20 v_1^3 v_{111} - 15 v_1^2 v_{11}^2 + 3 v_1^2 v_5 + 3 v_1^2 v_{11111} -10 v_1 v_3 v_{111}\\
&+ 20 v_1 v_{11} v_{13} - 12 v_1 v_{11} v_{1111} + 6 v_1 v_{111}^2 - 5 v_3 v_{11}^2 + 7 v_{11}^2 v_{111} + v_1 v_{115} \\
&- v_3 v_{11111} + v_5 v_{111} - v_{11} v_{15} + 2 v_{13} v_{1111} - 2 v_{111} v_{113} + v_{111} v_{11111} - v_{1111}^2.
\end{eqsplit}
\end{subequations}
The multiform E-L equations are equivalent to 
\begin{equation}\label{kdv135:eq}
v_3 = v_{111} + 3 v_1^2 , \qquad v_5 = v_{11111} + 10 v_1^3 + 5 v_{11}^2 + 10 v_1 v_{111}.
\end{equation}
and differential consequences. If we define the form $\omegone$ to be
\begin{eqsplit}
\omegone =& - v_1 \delta v \wedge dx^1 + (v_3 - 2 v_{111} - 6v_1^2) \delta v \wedge dx^3 + v_{11} \delta v_1 \wedge dx^3 - v_1 \delta v_{11} \wedge dx^3\\
&+(v_5 - 20 v_1^3 - 20 v_1 v_{111} - 10 v_{11}^2 - 2 v_{11111}) \delta v \wedge dx^5 \\
&+ (20 v_1 v_{11} + 2 v_{1111}) \delta v_1 \wedge dx^5 - 2 v_{111} \delta v_{11} \wedge dx^5,
\end{eqsplit}
one can check that $\delta \Lambda + d \omegone = 0$ are equivalent to \eqref{kdv135:eq}. The symplectic multiform is then $\Omega = \omega_1 \wedge dx^1 + \omega_3 \wedge dx^3 + \omega_5 \wedge dx^5$, where
\begin{subequations}
\begin{gather}
\omega_1 = \delta v \wedge \delta v_1,\\
\omega_3 = \delta v_3 \wedge \delta v -2 \delta v_{111} \wedge \delta v +2 \delta v_{11} \wedge \delta v_1 - 12 v_1 \delta v_1 \wedge \delta v,
\end{gather}
\begin{eqsplit}
\omega_5 =& \delta v_5 \wedge \delta v + (60 v_1^2 + 20 v_{111} ) \delta v \wedge \delta v_1 - 20 v_1 \delta v_{111} \wedge \delta v - 20 v_{11} \delta v_{11} \wedge \delta v\\
&- 2 \delta v_{11111} \wedge \delta v + 20 v_1 \delta v_{11} \wedge \delta v_1 + 2 \delta v_{1111} \wedge \delta v_1 - 2 \delta v_{111} \wedge \delta v_{11}.
\end{eqsplit}
\end{subequations}
The Hamiltonian multiform is obtained in the usual way and reads $\Ham = H_{13}\,dx^{13} + H_{35}\, dx^{35} + H_{15}\, dx^{15}$ where
\begin{subequations}
\begin{gather}
H_{13} = v_1 v_3 + v_{11}^2 - 2 v_1 v_{111} - 4 v_1^3,\\
H_{15} = v_1 v_5 - 15 v_1^4 - 20 v_1^2 v_{111} - 2 v_{11111} v_1 + 2 v_{1111} v_1 - v_{111}^2,
\end{gather}
\begin{eqsplit}
H_{35} = &- 10 v_1^3 v_3 - 10 v_1 v_{111} v_3 - 5 v_{11}^2 v_3 - v_{11111} v_3 + v_{111} v_5 +3 v_1^2 v_5 - 6 v_1^5 - 20 v_1^3v_{111} \\
&+ 15 v_1^2 v_{11}^2 - 3 v_1^2 v_{11111} + 12 v_1 v_{11} v_{1111} - 6 v_1 v_{111}^2 - 7 v_{11}^2 v_{111} - v_{111} v_{11111} + v_{1111}^2.
\end{eqsplit}
\end{subequations}
One can then proceed in a similar way to the 123-times case and verify the validity of the multiform Hamilton equations.\\
\subsubsection{Hamiltonian forms and conservation laws}
We obtain that a $1$-form 
\begin{equation}
F = F_1(v,v_1) \, dx^1 + F_3 (v,v_1,v_3,v_{11},v_{111}) \, dx^3+ F_5(v, v_1,v_5, v_{11},v_{111},v_{1111},v_{11111}) \, dx^5
\end{equation}
is Hamiltonian if and only if
\begin{subequations}\label{kdv135:hamform}
\begin{gather}
    \parder{F_1}{v_1} = \frac{1}{2} \parder{F_3}{v_{111}} = \frac{1}{2} \parder{F_5}{v_{11111}} = - \parder{F_3}{v_3} = - \parder{F_5}{v_5},\\
    \parder{F_1}{v} = \frac{1}{2} \parder{F_3}{v_{11}} =\frac{1}{2} \parder{F_5}{v_{1111}}, \qquad \parder{F_5}{v_{111}} = \parder{F_3}{v_1} + 4v_1 \parder{F_5}{v_{11111}}.
\end{gather}
\end{subequations}
Its related Hamiltonian vector field is
\begin{eqsplit}
    \xi_F =& \parder{F_1}{v_1} \partial_v - \parder{F_1}{v}\partial_{v_1} + \left(\parder{F_3}{v} + + 10 v_{11} \parder{F_3}{v_{111}} + 4 v_1 \parder{F_3}{v_{11}} - \parder{F_5}{v_{11}} \right) \partial_{v_3}\\
    &+ \left( \parder{F_5}{v} - 10 v_1 \parder{F_5}{v_{11}} + 10 v_{11} \parder{F_5}{v_{111}} + (70 v_1^2 - 10 v_{111}) \parder{F_5}{v_{1111}} \right) \partial_{v_5} \\
    &+ \left(\frac{1}{2} \parder{F_3}{v_1} - 3 v_1 \parder{F_3}{v_{111}}\right) \partial_{v_{11}} + \left(- \frac{1}{2} \parder{F_5}{v_{11}} + 5 v_{11} \parder{F_5}{v_{11111}} + 5 v_1 \parder{F_5}{v_{1111}} \right) \partial_{v_{111}} \\
    & + \left( \frac{1}{2} \parder{F_5}{v_1} - 5 v_1 \parder{F_5}{v_{111}} + (35 v_1^2 - 5 v_{111})\parder{F_5}{v_{11111}}\right) \partial_{v_{1111}}\,.
\end{eqsplit}
From the equations \eqref{kdv135:hamform} one can obtain a conservation law:
\begin{equation}
    F= v_1 \, dx^1 + (-v_3 + 2 v_{111} + 6 v_1^2) \, dx^3 + (-v_5 + 2 v_{11111} + 20 v_1 v_{111} + 5 v_{11}^2 + 10 v_1^2) \, dx^5\,.
\end{equation}
\section{Example: sine-Gordon hierarchy in light-cone coordinates}\label{Section:SG}

In this section we will show another example, i.e. the first two levels of the sine-Gordon hierarchy in light-cone coordinates. 

\subsection{Multiform Euler-Lagrange equations}
A lagrangian multiform for this set of equations has been obtained for the first time in \cite{Suris_2016} and is $\Lag=L_{12}\,dx^{12} + L_{13}\, dx^{13}+ L_{23}\,dx^{23}$, where
\begin{subequations}
\begin{gather}
L_{12} = \frac{1}{2} u_1 u_2 - \cos u,\\
L_{13} = \frac{1}{2} u_1 u_3 + \frac{1}{2}u^2_{11}-\frac{1}{8}u_1^4,\\
L_{23} = - \frac{1}{2} u_2 u_3 + u_{11} u_{12} - u_{11}\sin u + \frac{1}{2} u_1^2 \cos{u}.
\end{gather}
\end{subequations}
The multiform E-L equations $d \delta \Lag=0$ are equivalent to
\begin{equation}\label{SG123:eq}
u_{12} - \sin u=0,\qquad u_{3} - \frac{1}{2}u^3_1  - u_{111}=0
\end{equation}
and differential consequences.

\subsection{The symplectic and Hamiltonian multiform}
An similar computation to the ones above yields the form $\omegone$ as
\begin{equation}
 \omegone= - \frac{1}{2} u_1 \delta u \wedge dx^1 + \frac{1}{2} u_2 \delta u \wedge dx^2  - (\frac{u_{111}}{2} + \frac{u_{1}^3}{4}) \delta u \wedge dx^3 + u_{11} \delta u_1 \wedge dx^3.
\end{equation}
The $\delta$-differential of $\omegone$ is the symplectic multiform $\Omega  =   \omega_1 \wedge dx^1+\omega_2 \wedge dx^2+\omega_3 \wedge dx^3$, with
\begin{subequations}
\begin{gather}
\omega_1 = \frac{1}{2} \delta u \wedge \delta u_1,\\
\omega_2 = \frac{1}{2} \delta u_2 \wedge \delta u, \\
\omega_3 = - \frac{1}{2} \delta u_{111} \wedge \delta u - \frac{3u_1^2}{4} \delta u_1 \wedge \delta u + \delta u_{11}\wedge \delta u_1 .
\end{gather} 
\end{subequations}
The Hamiltonian multiform $\Ham = H_{12}\, dx^{12}  + H_{13}\, dx^{13}  + H_{23} \,dx^{23}$ is computed as 
\begin{subequations}
\begin{gather}
H_{12} = \frac{1}{2} u_1 u_2 + \cos u \,,\\
H_{13} = -\frac{1}{2} u_1 u_{111} + \frac{1}{2} u_{11}^2 - \frac{1}{8} u_1^4 \,,\\
H_{23} =- \frac{1}{2} u_2 u_{11} - \frac{1}{4} u_1^2 u_2 + u_{11} \sin u - \frac{1}{2} u_1^2 \cos u  \,.
\end{gather}
\end{subequations}
The multiform Hamilton equations are obtained as $\displaystyle\delta \mathcal{H} = \sum_{j=1}^3 dx^j \wedge \tilpartial_j \lrcorner \Omega$ and are equivalent to the generalised Euler-Lagrange equations, as required.
\subsection{Hamiltonian forms and multi-time Poisson brackets}
One can then investigate the presence of Hamiltonian forms:
\begin{itemize}
\item A 0-form $H(u, u_1,u_2,u_{11},u_{111})$ is always Hamiltonian, with Hamiltonian vector field
\begin{eqsplit}
\xi_H=& \left(2 \parder{H}{u_1} - 3 u_1^2 \parder{H}{u_{111}} \right) \partial_u \wedge \partial_1 - 2 \parder{H}{u_2} \partial_u \wedge \partial_2\\ 
&+ 2 \parder{H}{u_{111}}\partial_u \wedge \partial_3 +\left(\frac{3}{2} u_1^2 \parder{H}{u_{11}} - 2 \parder{H}{u} \right) \partial_{u_1} \wedge \partial_1 - \parder{H}{u_{11}} \partial_{u_1} \wedge \partial_3.
\end{eqsplit}
We remark that $\xi_H$ is not unique;
\item A 1-form $P= P_1 \,dx^1 + P_2 \,dx^2 + P_3 \,dx^3$  is Hamiltonian if $P_1 = P_1 (u,u_1)$, $P_2=P_2(u,u_2)$, $P_3=P_3(u,u_1,u_{11},u_{111})$, and  
\begin{gather}
\parder{P_3}{u_{11}} = 2 \parder{P_1}{u}, \quad \parder{P_3}{u_{111}} = - \parder{P_2}{u_2} = \parder{P_1}{u_{1}}.
\end{gather}
and its related vector field is
\begin{eqsplit}
\xi_P =& 2 \parder{P_1}{u_1} \partial_u - 2 \parder{P_1}{u} \partial_{u_1} +2 \parder{P_2}{u} \partial_{u_2} \\
&+  \left( \parder{P_3}{u_1} - \frac{3}{2} u_1^2 \parder{P_3}{u_{111}} \right)\partial_{u_{11}} + \left(\frac{3}{2} u_1^2 \parder{P_3}{u_{11}} - 2 \parder{P_3}{u}\right)\partial_{u_{111}} .
\end{eqsplit}
\item The only Hamiltonian 2-forms or 3-forms are the constant ones.
\end{itemize}
For such forms we can define the multi-time Poisson brackets. The Poisson bracket between an Hamiltonian 0-form $H$ and an Hamiltonian 1-form $P=P_1\, dx^1 + P_2\, dx^2 + P_3 \,dx^3$ is $\xi_P H$, therefore
\begin{eqsplit}
\cpb{H}{P} =& 2 \parder{P_1}{u_1}\parder{H}{u} - 2 \parder{P_1}{u} \parder{H}{u_1} + 2 \parder{P_2}{u} \parder{H}{u_2} - 2 \parder{P_3}{u}\parder{H}{u_{111}}\\
&+ \parder{P_3}{u_1} \parder{H}{u_{11}} - \frac{3}{2} u_1^2 \parder{P_3}{u_{111}}\parder{H}{u_{11}} + \frac{3}{2}u_1^2 \parder{P_3}{u_{11}}\parder{H}{u_{111}} 
\end{eqsplit}
If $\displaystyle P = \sum_{i=1}^3 P_i dx^i$ and $\displaystyle Q= \sum_{i=1}^3 Q_i dx^i$ are Hamiltonian 1-forms, then their Poisson bracket satisfies the decomposition
\begin{equation}
\cpb{P}{Q} = \{P_1,Q_1\}_1\, dx^1 +\{P_2,Q_2\}_2\, dx^2 + \{P_3,Q_3\}_3\, dx^3,
\end{equation}
where 
\begin{subequations}
\begin{gather}
\pb{P_1}{Q_1}_1 = 2 \parder{P_1}{u} \parder{Q_1}{u_1} - 2\parder{P_1}{u_1} \parder{Q_1}{u} \,,\\
\pb{P_2}{Q_2}_2 = 2  \parder{P_2}{u_2}\parder{Q_2}{u} -2  \parder{P_2}{u}\parder{Q_2}{u_2}  \,,
\end{gather}
\begin{eqsplit}
\pb{P_3}{Q_3}_3 =&2 \parder{P_3}{u} \parder{Q_3}{u_{111}} - 2 \parder{P_3}{u_{111}}\parder{Q_3}{u}   + \parder{P_3}{u_{11}} \parder{Q_3}{u_1} - \parder{P_3}{u_1} \parder{Q_3}{u_{11}} \\
&+ \frac{3}{2} u_1^2 \parder{P_3}{u_{111}}\parder{Q_3}{u_{11}} -   \frac{3}{2} u_1^2 \parder{P_3}{u_{11}}\parder{Q_3}{u_{111}}  \,.
\end{eqsplit}
\end{subequations}

Contrary to the pKdV example, (and AKNS example below), for the sG we were not able to find a Hamiltonian $1$-form producing conservation laws in the sense of Definition \ref{cons_law}.
However, it is possible to find a non-Hamiltonian 1-form $F= F_1\, dx^1 + F_2\, dx^2 + F_3\, dx^3$ that is closed on the equations of motion, as follows:
\begin{equation}
F_1=\frac{1}{2}u_1^2\,,~~F_2=-\cos u\,,~~F_3=\frac{3}{8}u_1^4+u_1u_{111}-\frac{1}{2}u_{11}^2\,.
\end{equation}
Then, on the equations of motion, one checks that 
\begin{eqnarray}
\partial_1F_2=\partial_2F_1\,,~~\partial_1F_3=\partial_3F_1\,,~~\partial_2F_3=\partial_3 F_2\,.
\end{eqnarray}
Thus, the sG example points to a need to extend our approach to conservation laws beyond Hamiltonian forms.

\section{Example: AKNS hierarchy}\label{Section:AKNS}

Our last example deals with the AKNS hierarchy. For this example, we include one more time compared to previous example, to remind the reader that in principle we can keep adding more times in a multiform, corresponding to adding more and more flows in the hierarchy. As becomes clear in this example, the explicit expression soon become cumbersome though.

\subsection{Multiform Euler-Lagrange equations}
We start from the Lagrangian multiform found in \cite{Sleigh_Nijhoff_Caudrelier_2019_2} 
$$\Lag = L_{12} \,dx^{12} + L_{13}\, dx^{13} + L_{14} \, dx^{14} + L_{23}\, dx^{23} + L_{24} \, dx^{24} + L_{34} \, dx^{34}\,,$$ 
where
\begin{subequations}
	\begin{gather}
	L_{12} = \frac{1}{2}(rq_2- qr_2) + \frac{i}{2}q_1 r_1 + \frac{i}{2} q^2 r^2,\\
	L_{13}= \frac{1}{2}(r q_3-q r_3  ) - \frac{1}{8}(r_1 q_{11} - q_1 r_{11}) - \frac{3qr}{8} (r q_1 -q r_1),\\
	L_{14} = \frac{1}{2} (r q_4 -  q r_4) - \frac{5i}{16} qr (q r_{11} + r q_{11}) - \frac{3i}{16} (q^2 r_1^2 + q_1^2 r^2) - \frac{i}{4} qr q_1 r_1 + \frac{i}{8} q_{11} r_{11}  + \frac{i}{4} q^3 r^3,
	\end{gather}
	\begin{eqsplit}
		L_{23} = & \frac{1}{4}(q_2 r_{11} - r_2 q_{11}) - \frac{i}{2}(q_3 r_1 + r_3 q_1) + \frac{1}{8}(q_1 r_{12} - r_1 q_{12})\\ 
		&+ \frac{3qr}{8}(qr_2- r q_2) - \frac{i}{8}q_{11}r_{11} + \frac{i}{4} q r(q r_{11} + r q_{11}) - \frac{i}{8} (q r_1 - r q_1)^2 - \frac{i}{2} q^3 r^3.
	\end{eqsplit}
	\begin{eqsplit}
		L_{24} = & \frac{3}{8} q^2 r^2 (r q_1 - q r_1) - \frac{i}{16} (q^2 r_1 r_2 + r^2 q_1 q_2) - \frac{5i}{16} qr ( q r_{12} + r q_{12}) - \frac{1}{8} qr(r q_{111} - q r_{111})\\ 
		&- \frac{1}{8}(q^2 r_1 r_{11} - r^2 q_1 q_{11}) - \frac{1}{8}q_1 r_1(r q_1 - q r_1) + \frac{1}{4} qr (r_1 q_{11} - q_1 r_{11}) + \frac{3i}{8} qr (q_1 r_2 + r_1 q_2)\\ 
		&- \frac{i}{8} (q_{111} r_2 + r_{111} q_2) + \frac{1}{16} (q_{111} r_{11} - r_{111} q_{11}) + \frac{i}{8} (q_{11} r_{12} + r_{11} q_{12}) - \frac{i}{2} (q_1 r_4 + r_1 q_4),
	\end{eqsplit}
	\begin{eqsplit}
		L_{34} =& \frac{i}{8} (q_{11} r_{13} + r_{11} q_{13}) - \frac{i}{8} (q_{111} r_3 + r_{111} q_3) - \frac{i}{32} q_{111} r_{111} + \frac{i}{32} (q^2 r_{11}^2 + r^2 q_{11}^2) \\
		&+ \frac{i}{32} q_1^2 r_1^2 + \frac{3}{8} qr(r q_4 - q r_4) + \frac{9i}{32} q^4 r^4 - \frac{3i}{16} q^2 r^2 (q r_{11} + r q_{11}) - \frac{i}{16} (q^2 r_1 r_3 + r^2 q_1 q_3)\\ 
		&- \frac{5i }{16}qr (q r_{13} + r q_{13}) + \frac{1}{4} (q_{11} r_4 - r_{11} q_4) + \frac{3i}{16}qr (q_1 r_{111} + r_1 q_{111}) + \frac{i}{16} qr q_{11} r_{11} \\
		&- \frac{i}{16} q_1 r_1 (q r_{11} + r q_{11}) - \frac{15i}{16} q^2 r^2 q_1 r_1 + \frac{3i}{8} qr (q_1 r_3 + r_1 q_3) - \frac{1}{8} (q_1 r_{14} - r_1 q_{14}),
	\end{eqsplit}
\end{subequations}
The corresponding multiform Euler-Lagrange equations $\delta d L=0$ produce the familiar first three levels of the AKNS hierarchy
\begin{subequations}
\begin{gather}
\label{AKNS1}
q_2 - \frac{i}{2} q_{11} + i q^2r=0, \qquad r_2 + \frac{i}{2} r_{11} - i q r^2=0,\\
q_3 + \frac{1}{4} q_{111} - \frac{3}{2} qr q_1=0, \qquad r_3 + \frac{1}{4}r_{111} - \frac{3}{2} qrr_1=0,\\
q_4 =- \frac{i}{8} q_{1111} - \frac{3i}{4} q^3 r^2 + \frac{i}{4} q^2 r_{11} + \frac{i}{2} q q_1 r_1  + i q r q_{11} + \frac{3i}{4} q_1^2 r ,\\
\label{AKNS4}
r_4 = \frac{i}{8} r_{1111} +  \frac{3i}{4} q^2 r^3 - \frac{i}{4} r^2 q_{11} - \frac{i}{2} r q_1 r_1  - i q r r_{11} - \frac{3i}{4} r_1^2 q.
\end{gather}
\end{subequations}

\subsection{The symplectic and Hamiltonian multiforms}
As done in the previous two examples, the computation of the form $\omegone$ from $\delta {\cal L}$ gives
\begin{eqsplit}
\omegone =& \left(-\frac{1}{2} r \delta q  + \frac{1}{2} q \delta r \right)\wedge dx^1 + \left(\frac{i}{2} q_1 \delta r + \frac{i}{2} r_1 \delta q\right)\wedge dx^2 \\
&+ \left(\left(\frac{1}{4} r_{11} - \frac{3}{8}qr^2\right) \delta q   +\left( - \frac{1}{4} q_{11} + \frac{3}{8}q^2 r\right) \delta r  - \frac{1}{8} r_1 \delta q_1  + \frac{1}{8}q_1 \delta r_1\right)\wedge dx^3\\
&+\bigg(\left( - \frac{i}{8} r_{111} - \frac{i}{16} q_1 r^2 + \frac{3i}{8} qr r_1\right) \delta q  +\left( - \frac{i}{8} q_{111} - \frac{i}{16} q^2 r_1 + \frac{3i}{8} qr q_1\right) \delta r \\
&+ \left( \frac{i}{8} r_{11} - \frac{5i}{16} qr^2 \right) \delta q_1  +  \left( \frac{i}{8} q_{11} - \frac{5i}{16} q^2r \right) \delta r_1 \bigg) \wedge dx^4 .
\end{eqsplit}
and its $\delta$-differential is the symplectic multiform
\begin{equation}
\Omega = \omega_1 \wedge dx^1 + \omega_2 \wedge dx^2 + \omega_3 \wedge dx^3 + \omega_4 \wedge dx^4,
\end{equation}
where 
\begin{subequations}
\begin{gather}
\omega_1 = \delta q \wedge \delta r\,,\\
\omega_2 = \frac{i}{2}(\delta q_1 \wedge \delta r + \delta r_1 \wedge \delta q)\,,\\
\omega_3 = \frac{1}{4}( \delta r_{11} \wedge \delta q - \delta q_{11} \wedge \delta r) + \frac{1}{4} \delta q_1 \wedge \delta r_1 + \frac{3qr}{2} \delta q \wedge \delta r\,,
\end{gather}
\begin{eqsplit}
\omega_4 =& - \frac{i}{8} \delta r_{111} \wedge \delta q - \frac{i}{8} \delta q_{111} \wedge \delta r+ \frac{i}{4}r^2 \delta q_1 \wedge \delta q - \frac{i}{4} q^2 \delta r_1 \wedge \delta r \\
&+ i qr \delta q_1 \wedge \delta r + i qr \delta r_1 \wedge \delta q + \frac{i}{2} ( q_1 r - q r_1) \delta q \wedge \delta r 
\end{eqsplit}
\end{subequations}
The Hamiltonian multiform $\Ham = H_{12}\, dx^{12}  + H_{13}\, dx^{13}+ H_{14} \, dx^{14} + H_{23}\, dx^{23} + H_{24} \, dx^{24} + H_{34}\, dx^{34}$ can now be computed and brings
\begin{subequations}
\begin{gather}
H_{12} = \frac{i}{2}(q_1 r_1 - q^2 r^2)\,,\\
H_{13} = \frac{1}{4}(r_{11} q_1-q_{11} r_1)\,\\
H_{14} = \frac{i}{8} (q_1^2 r^2 + q^2 r_1^2) - \frac{i}{8} (q_{111} r_1 + q_1 r_{111}) + \frac{i}{8} q_{11} r_{11} + i qr q_1 r_1 - \frac{i}{4} q^3 r^3, \\
H_{23} = \frac{i}{8} q_{11} r_{11} -\frac{iqr}{4} (q r_{11} + q_{11} r) + \frac{i}{8}(q r_1- q_1r)^2 + \frac{i}{2} q^3 r^3\,.
\end{gather}
\begin{eqsplit}
H_{24} =& \frac{qr}{8}(r q_{111} - q r_{111}) + \frac{3}{8} q^2 r^2 (qr_1-r q_1 ) + \frac{q_1 r_1}{8}(rq_1 - q r_1)\\
&+ \frac{qr}{4} (q_1 r_{11}-r_1 q_{11}) + \frac{1}{16} (q_{11} r_{111} - q_{111} r_{11}) + \frac{1}{8} (q^2 r_1 r_{11} - r^2 q_1 q_{11}), 
\end{eqsplit}  
\begin{eqsplit}
H_{34} =& \frac{i}{16} (q^2 r_1 r_3 + r^2 q_1 q_3) + \frac{i}{32} q_{111} r_{111} - \frac{i}{32} (q^2 r_{11}^2 + q_{11}^2 r^2) - \frac{i}{32} q_1^2 r_1^2\\
&- \frac{9i}{32} q^4 r^4 + \frac{3i}{16} q^2 r^2 (q r_{11} + q_{11} r) - \frac{3i}{16} qr (q_1 r_{111} + r_1 q_{111}) \\
&- \frac{i}{16} qr q_{11} r_{11} + \frac{i}{16} q_1r_1 (q r_{11} + q_{11} r) + \frac{15i}{16} q^2 r^2 q_1 r_1.
\end{eqsplit}
\end{subequations}
The multiform Hamilton equations are obtained as $\displaystyle\delta \mathcal{H} = \sum_{j=1}^4 dx^j \wedge \tilpartial_j \lrcorner \Omega$. One checks with a direct computation that they indeed reproduce the set of equations \eqref{AKNS1}-\eqref{AKNS4}. We remark that $H_{12}$ and $H_{13}$ are the covariant Hamiltonian densities of respectively the NLS equations and the modified KdV equation already obtained in \cite{Caudrelier_Stoppato_2020}.

\subsection{Hamiltonian forms and multi-time Poisson brackets}
We have the following facts
\begin{itemize}
\item Any 0-form $H$ is Hamiltonian;
\item A 1-form 
\begin{eqsplit}
    F =& F_1(q,r)\, dx^1 + F_2(q,r,q_1,r_1)\, dx^2 + F_3(q,r,q_1,r_1,q_{11},r_{11})\, dx^3\\
    &+ F_4(q,r,q_1,r_1,q_{11},r_{11},q_{111},r_{111}) \,dx^4
\end{eqsplit}
is Hamiltonian if
\begin{eqsplit}
\parder{F_1}{r} = 2i \parder{F_2}{r_1} = -4 \parder{F_3}{r_{11}} = -8i\parder{F_4}{r_{111}}\,,  &\qquad\parder{F_1}{q} = -2i \parder{F_2}{q_1} = -4 \parder{F_3}{q_{11}} = 8i\parder{F_4}{q_{111}}\,,\\
\parder{F_2}{r} = 2i \parder{F_3}{r_1} = - 4 \parder{F_4}{r_{11}}\,,  &\qquad \parder{F_2}{q} = -2i \parder{F_3}{q_1} = - 4 \parder{F_4}{q_{11}}\,,\\
\parder{F_4}{r_1} = -\frac{i}{4}qr \parder{F_1}{r} - \frac{i}{2} \parder{F_3}{r} + \frac{i}{4}q^2 \parder{F_1}{q}\,, & \qquad \parder{F_4}{q_1} = \frac{i}{4}qr \parder{F_1}{q} + \frac{i}{2} \parder{F_3}{q} - \frac{i}{4}r^2 \parder{F_1}{r}\,.
\end{eqsplit} 
and its related vector field is
\begin{eqsplit}
\xi_F =& \parder{F_1}{r} \partial_q - \parder{F_1}{q} \partial_r -2i \parder{F_2}{r} \partial_{q_1}-2i \parder{F_2}{q} \partial_{r_1} \\
&-4 \left(6qr \parder{F_3}{r_{11}} + \parder{F_3}{r}\right)\partial_{q_{11}} +4\left(6qr \parder{F_3}{q_{11}} + \parder{F_3}{q}\right)\partial_{r_{11}}\\
&+8i\left(\parder{F_4}{r} + 2q^2 \parder{F_4}{q_{11}} + 8qr\parder{F_4}{r_{11}} + 4(q_1r-r_1q) \parder{F_4}{r_{111}} \right)\partial_{q_{111}}\\
&+8i\left(\parder{F_4}{q} + 2r^2 \parder{F_4}{r_{11}} + 8qr\parder{F_4}{q_{11}} - 4(q_1r-r_1q) \parder{F_4}{q_{111}} \right) \partial_{r_{111}}\, .
\end{eqsplit}
\end{itemize}
We can write the general expression of a Hamiltonian 1-form, given the first coefficient $F_1(q,r)$.  Since $\parder{F_1}{r} = 2i \parder{F_2}{r_1}$ and $\parder{F_1}{q} = -2i \parder{F_2}{q_1}$ we need
\begin{equation}
F_2 = \frac{i}{2} \left(\parder{F_1}{q} q_1 - \parder{F_1}{r} r_1\right) + a(q,r).
\end{equation}
$a(q,r)$ is a term left to determine. Then, since $\parder{F_3}{q_{11}} = -\frac{1}{4} \parder{F_1}{q}$ and $\parder{F3}{r_{11}} = - \frac{1}{4} \parder{F_1}{r}$ we have 
\begin{equation}
F_3 = - \frac{1}{4} \parder{F_1}{q} q_{11} - \frac{1}{4} \parder{F_1}{r} r_{11} + (\dots)(q,r,q_1,r_1).
\end{equation}
Then we use the fact that $\parder{F_2}{r} = 2i \parder{F_3}{r_1}$ and $\parder{F_2}{q} = -2i \parder{F_3}{q_1}$ to obtain
\begin{equation}
\parder{F_3}{q_1} = \frac{1}{4} \left( \parder{^2 F_1}{q \partial r} r_1 - \parder{^2 F_1}{q^2} q_1 \right) + \frac{i}{2} \parder{a}{q}, \qquad \parder{F_3}{r_1} = \frac{1}{4}\left(  \parder{^2 F_1}{q \partial r} q_1 - \parder{^2 F_1}{r^2} r_1\right)- \frac{i}{2} \parder{a}{r}.
\end{equation}
we then use partial integration and find
\begin{equation}
F_3 = - \frac{1}{4} \parder{F_1}{q} q_{11} - \frac{1}{4} \parder{F_1}{r} r_{11}- \frac{1}{8} \left(\parder{^2 F_1}{r^2} r_1^2 + \parder{^2F_1}{q^2}q_1^2 - 2 \parder{^2 F_1}{q \partial r}q_1 r_1 \right)+ \frac{i}{2} \parder{a}{q} q_1 - \frac{i}{2} \parder{a}{r} r_1 + b(q,r),
\end{equation}
where $b(q,r)$ is another term left to determine. Similarly we can compute the fourth coefficient $F_4$, which is
\begin{eqsplit}
F_4 =& \frac{i}{8}\left(\parder{F_1}{r} r_{111} - \parder{F_1}{q} q_{111} + \parder{^2 F_1}{r^2} r_1 r_{11} - \parder{^2 F_1}{q^2} q_1 q_{11} + \parder{^2F_1}{q \partial r} (r_1 q_{11} - q_1 r_{11}) \right)\\
&- \frac{1}{4} \left( \parder{a}{r} r_{11} + \parder{a}{q} q_{11} \right) + \frac{i}{4}\left(qrq_1 +q^2 r_1\right) \parder{F_1}{q} - \frac{i}{4} \left(qrr_1 + r^2 q_1\right)\parder{F_1}{r} \\
&+ \frac{i}{48} \left( \parder{^3 F_1}{r^3} r_1^3 - \parder{^3 F_1}{q^3}q_1^3\right) + \frac{i}{16} \left( \parder{^3 F_1}{q^2 \partial r} q_1^2 r_1 -\parder{^3 F_1}{q \partial r^2} q_1 r_1^2  \right)\\
&- \frac{1}{8}\left( \parder{^2 a}{q^2} q_1^2 + \parder{^2 a}{r^2} r_1^2 + 2 \parder{^2 a}{q \partial r} q_1 r_1 \right) + \frac{i}{2} \left(\parder{b}{q}q_1 - \parder{b}{r} r_1 \right) + c(q,r)
\end{eqsplit}
where $c(q,r)$ is left to determine.\\

For Hamiltonian forms we can define the multi-time Poisson brackets. The Poisson bracket between a 0-form $H(q,r,q_1,r_1,q_{11},r_{11}, q_{111}, r_{111})$ and an Hamiltonian 1-form $P=P_1\, dx^1 + P_2 \,dx^2 + P_3 \,dx^3 + P_4\, dx^4$ is $\xi_P H$:
\begin{eqsplit}
\cpb{H}{P}& = \parder{H}{q} \parder{P_1}{r} -  \parder{H}{r} \parder{P_1}{q} - 2i \parder{H}{r_1} \parder{P_2}{q} - 2i \parder{H}{q_1}\parder{P_2}{r}\\ 
&+ 24qr \parder{P_3}{q_{11}} \parder{H}{r_{11}} - 24 qr \parder{P_3}{r_{11}}\parder{H}{q_{11}} + 4 \parder{H}{r_{11}}\parder{P_3}{q} - 4 \parder{H}{q_{11}}\parder{P_3}{r}\\
&+8i\left(\parder{P_4}{r} + 2q^2 \parder{P_4}{q_{11}} + 8qr\parder{P_4}{r_{11}} + 4(q_1r-r_1q) \parder{P_4}{r_{111}} \right)\parder{H}{q_{111}}\\
&+8i\left(\parder{P_4}{q} + 2r^2 \parder{P_4}{r_{11}} + 8qr\parder{P_4}{q_{11}} - 4(q_1r-r_1q) \parder{P_4}{q_{111}} \right) \parder{H}{r_{111}}\,.
\end{eqsplit}
If $\displaystyle P = \sum_{i=1}^4 P_i dx^i$ and $\displaystyle Q= \sum_{i=1}^4 Q_i dx^i$ are Hamiltonian 1-forms, then their Poisson bracket satisfies the decomposition
\begin{equation}
\cpb{P}{Q} = -\{P_1,Q_1\}_1\, dx^1 +\{P_2,Q_2\}_2\, dx^2 + \{P_3,Q_3\}_3\, dx^3 + \{P_4,Q_4\}_4 \, dx^4,
\end{equation}
where 
\begin{subequations}
\begin{gather}
\pb{P_1}{Q_1}_1 = \parder{P_1}{r}\parder{Q_1}{q} - \parder{P_1}{q}\parder{Q_1}{r}\,,\\
\pb{P_2}{Q_2}_2 = 2i \left( \parder{P_2}{q}\parder{Q_2}{r_1} - \parder{P_2}{r_1}\parder{Q_2}{q} + \parder{P_2}{r}\parder{Q_2}{q_1} - \parder{P_2}{q_1}\parder{Q_2}{r} \right) \,,
\end{gather}
\begin{eqsplit}
\pb{P_3}{Q_3}_3 =& 4\bigg( \parder{P_3}{r_{11}} \parder{Q_3}{q} -\parder{P_3}{q} \parder{Q_3}{r_{11}} - \parder{P_3}{q_{11}} \parder{Q_3}{r} + \parder{P_3}{r} \parder{Q_3}{q_{11}} \\
&+ \parder{P_3}{q_1} \parder{Q_3}{r_1} - \parder{P_3}{r_1} \parder{Q_3}{q_1} -6qr( \parder{P_3}{q_{11}} \parder{Q_3}{r_{11}} - \parder{P_3}{r_{11}} \parder{Q_3}{q_{11}}) \bigg)\,,
\end{eqsplit}
\begin{eqsplit}
\pb{P_4}{Q_4}_4 =&- 8i \left( \parder{P_4}{q_{111}} \parder{Q_4}{r} -  \parder{P_4}{r} \parder{Q_4}{q_{111}}\right) + 8i  \left( \parder{P_4}{q} \parder{Q_4}{r_{111}} -  \parder{P_4}{r_{111}} \parder{Q_4}{q}\right)\\
&-  8i  \left( \parder{P_4}{q_1} \parder{Q_4}{r_{11}} -  \parder{P_4}{r_{11}} \parder{Q_4}{q_1}\right) +  8i\left( \parder{P_4}{q_{11}} \parder{Q_4}{r_{1}} -  \parder{P_4}{r_{1}} \parder{Q_4}{q_{11}}\right)\\
&+64iqr \left( \parder{P_4}{q_{11}} \parder{Q_4}{r_{111}} -  \parder{P_4}{r_{111}} \parder{Q_4}{q_{11}}\right) -64iqr \left( \parder{P_4}{q_{111}} \parder{Q_4}{r_{11}} -  \parder{P_4}{r_{11}} \parder{Q_4}{q_{111}}\right)\\
&+ 16iq^2 \left( \parder{P_4}{q_{11}} \parder{Q_4}{q_{111}} - \parder{P_4}{q_{111}} \parder{Q_4}{q_{11}}\right)+  16ir^2 \left( \parder{P_4}{r_{11}} \parder{Q_4}{r_{111}} - \parder{P_4}{r_{111}} \parder{Q_4}{r_{11}}\right) \\
&+ 32 i(q_1 r - r_1 q) \left( \parder{P_4}{r_{111}} \parder{Q_4}{q_{111}} - \parder{P_4}{q_{111}} \parder{Q_4}{r_{111}} \right)
\end{eqsplit}
\end{subequations}
Using this we can read the single-time Poisson brackets: $\pb{\;}{\;}_1$ is the usual equal-time Poisson bracket of the AKNS hierarchy, which in the traditional infinite dimensional setting provide the first structure (in the sense of bi-Hamiltonian theory) for the whole hierarchy, while $\pb{\;}{\;}_{2,3}$ are the dual Poisson Bracket of respectively the NLS and mKdV which can be found in \cite{Avan_Caudrelier_Doikou_Kundu_2016}. We remark the presence of a minus sign in front of $dx^1$, which is just so $\pb{\;}{\;}_1$ reproduces the usual single-time Poisson Brackets with the right sign. It remains an open question to relate our findings with the traditional theory of bi-Hamiltonian structures \`a la Magri \cite{Magri_1978}. 

\subsection{Conservation laws}
Since the coefficients of the Hamiltonian multiform are Hamiltonian, the multiform Hamilton equations in a Poisson bracket form are
\begin{equation}
d F = \ip{\xi_F}{\delta H} =\sum_{i<j =1}^4 \cpb{H_{ij}}{F} \,dx^{ij}
\end{equation} 
for any Hamiltonian 1-form $F = F_1\, dx^1 + F_2\, dx^2 + F_3\, dx^3 + F_4 \, dx^4$. We can also find the first conservation laws for the AKNS hierarchy, i.e. $F$ is a conservation law if and only if 
\begin{equation}
\cpb{H_{ij}}{F} =0 \qquad \forall i<j\,.
\end{equation} 
We can solve the latter equation in the space of Hamiltonian forms (see previous section for the general expression of the cofficients) to find a conservation law. From $(i,j)= (1,2)$ we get 
\begin{equation}
    \cpb{H_{12}}{F} = - i qr^2 \parder{F_1}{r} + i q^2 r \parder{F_1}{q} + \frac{i}{2} \parder{^2F_1}{q^2}q_1^2 - \frac{i}{2} \parder{^2F_1}{r^2}r_1^2 + \parder{a}{q}q_1 + \parder{a}{r} r_1 = 0\,.
\end{equation}
This translates into $r \parder{F_1}{r} = q \parder{F_1}{q}$ and $\parder{^2F_1}{q^2} = \parder{^2 F_1}{r^2} = 0$, and therefore $F_1 = qr$, and $\parder{a}{q} = \parder{a}{r} =0$, so therefore $a$ is constant, which we set to zero. The coefficients become then
\begin{subequations}
\begin{equation}
    F_1 = qr, \quad F_2 = \frac{i}{2} \left(rq_1 - qr_1\right), \quad F_3 = -\frac{1}{4}r q_{11} - \frac{1}{4} q r_{11} + \frac{1}{4}q_1 r_1 + b(q,r)\,,
\end{equation}
\begin{equation}
    F_4 = \frac{i}{8} \left( q r_{111} - r q_{111}  + q_{11} r_1 - q_1 r_{11}\right) + \frac{i}{2} \left( \parder{b}{q} q_1 - \parder{b}{r} r_1 \right) + c(q,r)\,
\end{equation}
\end{subequations}
with $b$ and $c$ left to determine. From $(i,j) = (1,3)$ we get 
\begin{equation}
    \cpb{H_{13}}{F} = - \frac{3}{2} (qr^2q_1 + q^2rr_1) + q_1 \parder{b}{q} + r_1 \parder{b}{r} =0\,,
\end{equation}
and therefore $b= \frac{3}{4}q^2 r^2$. The fourth coefficient becomes then $F_4 = \frac{i}{8} ( q r_{111} - r q_{111}) + \frac{i}{8} (q_{11} r_1 - q_1 r_{11}) +  \frac{3i}{4} qr(q_1 r - qr_1) +  c(q,r) $. It can be verified by looking at the coefficient $(1,4)$ that we have a conservation law when $c=0$. The conservation law is then
\begin{eqsplit}
F = &qr\, dx^1 + \frac{i}{2}(q_1 r- r_1 q)\, dx^2 + \frac{1}{4}(3q^2 r^2 + q_1 r_1 -q_{11}r - r_{11}q)\,dx^3\\
&+ \left(\frac{i}{8} ( q r_{111} - r q_{111}) + \frac{i}{8} (q_{11} r_1 - q_1 r_{11}) +  \frac{3i}{4} qr(q_1 r - qr_1) \right) \, dx^4.
\end{eqsplit}

\section*{Acknowledgements}
It is a pleasure to acknowledge helpful discussions with Frank Nijhoff, Duncan Sleigh, and Mats Vermeeren.

\section*{Data Availability Statement}
Data sharing is not applicable to this article as no new data were created or analyzed in this study.

\bibliography{bibliography_stoppato} 
\bibliographystyle{ieeetr}
\end{document}